\documentclass{article}
\usepackage{amsfonts}
\usepackage{amsmath}
\usepackage{amsthm}
\usepackage[margin=1.0in]{geometry}
\usepackage{graphicx}
\usepackage[update,prepend]{epstopdf}
\usepackage{color}

\newtheorem{mypro}{Proposition}
\newtheorem{mytho}{Theorem}
\newtheorem{mylem}{Lemma}

\def\qed{}

\def\bR{{\mathbb{R}}}
\def\bE{{\mathbb{E}}}

\def\eps{\varepsilon}
\def\btheta{\boldsymbol\theta}

\def\calF{{\cal F}}
\def\calG{{\cal G}}
\def\barH{{\overline{H}}}

\def\lhat{\hat{e}}
\def\Lhat{\hat{E}}

%\bibliographystyle{spbasic}
%\bibliographystyle{siam}
%\bibliographystyle{spphys}

%opening
\title{Realized volatility and parametric estimation of Heston SDEs}

\date{\today}

\author{
Robert Azencott
\thanks{Department of Mathematics,
University of Houston,
Houston, TX 77204-3008, (\texttt{razencot@math.uh.edu}).}
\and
Peng Ren
\thanks{Department of Mathematics,
University of Houston,
Houston, TX 77204-3008, (\texttt{pren@math.uh.edu}).}
\and
Ilya Timofeyev
\thanks{Department of Mathematics,
University of Houston,
Houston, TX 77204-3008, (\texttt{ilya@math.uh.edu}).}
}

\begin{document}
\maketitle
\begin{abstract}
We present a detailed analysis of \emph{observable} moments based parameter estimators for the Heston SDEs jointly driving the rate of returns $R_t$ and the squared volatilities $V_t$. Since volatilities are not directly observable, our parameter estimators are constructed from empirical moments of realized volatilities $Y_t$, which are of course observable. Realized volatilities are computed over sliding windows of size $\eps$, partitioned into $J(\eps)$ intervals. We establish criteria for the joint selection of  $J(\eps)$  and of the sub-sampling frequency of return rates data. 

We obtain explicit bounds  for the $L^q$ speed of convergence of realized volatilities to true volatilities as $\eps \to 0$. In turn, these bounds provide also $L^q$ speeds of convergence of  our observable estimators for the parameters of the Heston volatility SDE.

Our theoretical analysis is  supplemented by extensive numerical simulations of joint Heston SDEs to investigate the actual performances of our moments based parameter estimators. Our results provide practical guidelines for adequately fitting Heston SDEs parameters to observed stock prices series.

\end{abstract}

%\tableofcontents
\smallskip

\textbf{Keywords:} Heston model, parameter estimation, realized volatility, indirect observability

%%%%%%%%%%%%%%%%%%
%%%%%%%%%%%%%%%%%%
\section{Introduction}

Parametric estimation of stochastic differential equations (SDEs) has been an active research area for several decades. The majority of published results focus on \emph{Direct Observability} situations, where the observable data $X_t$ are assumed to be generated by the SDEs themselves. But in many practical situations, the SDEs driving an \emph{unobservable} process $X_t$ are parametrized by a vector $\btheta$ which needs to be estimated from observable data $Y_t^{\eps}$ which are only known to converge to $X_t$ as $\eps \to 0$. We refer to these situations as \emph{Indirect Observability} contexts.
A crucial point is then to assess estimation errors due to the use of approximate data (see, for example, \cite{crom4,past07,papast09,Zhang1,Zhang2,Bandi1,Ilze1}).
In our papers \cite{Azen1,Azen2,Azen3,azreti15}, we analyzed asymptotic consistency of parameter estimation under indirect observability in multiple contexts. In particular, in \cite{azreti15} we proved the asymptotic accuracy of parameter estimators based on empirical moments of indirect approximate observations, for a wide class of unobservable stationary non-Gaussian processes $X_t$ with \lq\lq{}fast\rq\rq{} mixing properties. 
Here we extend and deepen results from \cite{azreti15} to parameter estimation for the well known Heston SDEs \cite{Heston1} driving jointly the rate of returns $R_t$ of an arbitrary asset and its squared volatility $V_t$. Since the volatilities are not directly observable, classical observable approximations of $V_t$ are provided by \emph{realized volatilities} $Y_t^{\eps}$ computed on averaging time windows $(t - \eps, t)$. Such volatility approximations have been studied for instance in \cite{hoffmann2002,Gloter2007,Nielsen1,Christen1,Christen2,sahkim07,Ruiz94,dieb02}.

In this paper focused on feasible parameter estimation for the Heston volatility SDEs, we construct observable parameter estimators from the first and second order empirical moments of the realized volatility process $Y_t^{\eps}$, and analyze their $L^2$-consistency as $\eps \to 0$. In order to do this, we extend the
$L^2$ convergence of the realized volatility to $L^q$ (where $q$ is odd).
Maximum Likelihood estimation and estimates on $L^q$ norms for square-root diffusions 
were also addressed for example in \cite{alke13,bemidi08}.
%\red{
While the Maximum Likelihood Estimators (MLEs) have been used in many context to estimate parameters of stochastic differential equations, including the Heston model (e.g. \cite{Azencott2015,gloter2000}), MLEs can be quite sensitive to  model errors. We expect moment estimators of low order  to be much more robust with respect to small perturbations of the underlying model fitted to observable and possibly noisy data.
The empirical moments of $Y_t^{\eps}$ rely on explicit sub-sampling schemes which specify key computational parameters (e.g. number of points in the window $(t - \eps, t)$, observational time-step, and total number of observations) as functions of the window size $\eps$. In particular, optimal sub-sampling schemes involve explicit expressions for selecting the observational time-step $\Delta(\eps)$ and
number of observations $N(\eps)$ for the realized volatility. We demonstrate that the
optimal speed of convergence for moment estimators computed under indirect observability  is $O(\eps^{1/2})$.

When realized volatilities are computed over  sliding windows of small duration $\eps$, our target is to determine  nearly optimal stockprice sub-sampling rates enabling good control of estimation errors for the parameters of the Heston SDE driving the (unobservable) volatilities.
In contrast with other results on parametric estimation of the volatility Heston SDE under indirect observability (e.g. \cite{Gloter2007}), our results address estimation of both drift and diffusion parameters in the 
volatility SDE. Parametric estimation of the diffusion term for generic SDEs is a delicate task; several methods have been introduced 
\cite{hoffmann2002,bz2002,gloter2000,comte2009}, but comparing performance and robustness for  various  estimators still remains an active research area.

Application of the general theory developed in \cite{azreti15} requires a substantial analytical investigation of the Heston model. For realized volatilities $Y_t^{\eps}$ computed on sliding windows of length $\eps \to 0$, we give concrete estimates for the $L^q$ convergence speed of $Y_t^{\eps}$ to true volatility $V_t$ and we derive explicit nearly optimal sub-sampling schemes of $Y_t^{\eps}$ for consistent estimation of empirical moments. We compute theoretical convergence speeds for our observable estimators of the Heston SDE parameters as $\eps \to 0$ and we compare them to numerically evaluated convergence speeds. 
To this end, we perform numerical investigation of the Heston model, through extensive simulations with $\eps \to 0$. Our simulations refine and confirm our theoretical convergence rates for the realized volatilities as well as for our estimators of the Heston parameters. We thus validate asymptotically optimal ranges for the number of data points used to compute each realized volatility. Our numerical results indicate that for small but realistic values of $\eps$, nearly optimal convergence rates of our observable parameter estimators can still be achieved with data subsampling less frequent than the theoretically prescribed rates.

We introduce the Heston model and address the
$L^q$-H\"{o}lder continuity for squared volatilities in section \ref{sec3}.
We introduce the realized volatility and the notion of \emph{indirect observability} in section \ref{RealVol}. 
We prove the $L^q$ convergence of realized volatilities in section 
\ref{sec:Lqconverge}.
Some analytic properties of the Heston model are discussed in sections \ref{sec5}, \ref{sec:moments}, and moment-based parameter estimators for the volatility process are introduced in section \ref{sec7}. Theorem \ref{consistency} in section \ref{sec7} is one of our main analytical results, since it computes $L^2$ convergence rates for subsampled empirical moments of realized volatilities and hence yields convergence rates for our observable parameter estimators based on realized volatilities. 
In section \ref{sec8} we discuss pragmatic discretization rates for the averaging windows $(t - \eps, t)$. 
In sections \ref{speedRealVol} and \ref{paramSpeed} we perform an extensive numerical investigation of the Heston model, including numerical computation of the $L^2$ and $L^4$ speeds of convergence for the realized volatility process, speed of convergence of parameter estimators, and empirical covariance estimators.
Conclusions are presented in section \ref{conc}.

%%%%%%%%%%%%%%%%%%
%%%%%%%%%%%%%%%%%%

\section{The Heston stochastic volatility models}\label{heston}
\label{sec3}
\subsection{Generic stochastic volatility models}
In the well known Barnsdorff-Nielsen paper \cite{Nielsen1}, generic stochastic volatility models consider asset price processes $A_t$ such that the \emph{rate of return} process $dR_t = dA_t / A_t$ is driven by an  SDE of the form
\begin{equation}\label{dRt}
dR_t = \mu dt + \sqrt{V_t} dZ_t,
\end{equation}
where $\mu$ is a constant, $Z_t$ is a standard one-dimensional Brownian motion, and the square integrable continuous process $V_t > 0$ is called  the  \emph{spot variance} or \emph{squared volatility} of the return rate.
In this paper we will focus on the classical \emph{Heston joint SDEs}  which are widely used  examples of stochastic volatility models.

\subsection{The Heston joint SDEs}
Recall that in the Heston model \cite{Heston1} for the stochastic dynamics of asset price $A_t$ and squared volatility $V_t$, two coupled SDEs jointly drive $V_t$ and the rate of return $dR_t = dA_t /A_t$, namely 
\begin{eqnarray}
dR_t &=& \mu dt+\sqrt{V_t}dZ_t,
\label{HesEq1} \\
dV_t &=& \kappa(\theta-V_t)dt + \gamma\sqrt{V_t}dB_t.
\label{HesEq2}
\end{eqnarray}
Here $Z_t$ and $B_t$ are standard one dimensional Brownian motions with constant instantaneous correlation $\bE(dZ_t dB_t) = \rho dt$ where $-1 < \rho < 1$.

The autonomous volatility SDE \eqref{HesEq2} is parametrized by 3 parameters, the \lq\lq{}long run mean\rq\rq{} $\theta > 0 $ of $V_t$, the \lq\lq{}reversion rate\rq\rq{} $\kappa >0 $, and $\gamma > 0 $. 
To ensure that $V_t$ remains almost surely positive for all $t$ provided $V_0$ is almost surely positive, the parameter vector $\btheta = \left[ \kappa, \theta, \gamma \right]$ must verify the classical Feller condition \cite{Feller1}
\begin{equation}
\label{Feller}
\frac{\kappa \theta}{\gamma^2} > \frac{1}{2}.
\end{equation}
\emph{In this paper we assume that parameters in the equation for volatility satisfy the Feller condition above.}
The first Heston SDE \eqref{HesEq1} is parametrized by the constant \lq\lq{}mean return rate\rq\rq{} $\mu >0$ of the asset price, and by the correlation coefficient $\rho$.
To be specific, we define $Z_t = \rho B_t + (1 - \rho^2)^{1/2}\beta_t$, where $B_t$ and $\beta_t$ are two \emph{independent} standard Brownian motions. Let $\calF_t$ be the increasing filtration generated by $B_t$. Let $\calG_t$ be the increasing filtration generated by the pair $(Z_t, B_t)$, or equivalently by $(\beta_t, B_t)$.
Then the joint SDEs \eqref{HesEq1}, \eqref{HesEq2} have a unique solution  $V_t, R_t$  starting at any fixed $V_0>0$ and $R_0$. Moreover, $V_t$ is \emph{nonanticipating} with respect to both $\calF_t$, $\calG_t$,  and $R_t$ is \emph{non-anticipating} with respect to $\calG_t$. Both $V_t$ and $R_t$ are continuous in $t$. 
We will use the shortcut notation
$$
\bE[ U | V_0 = y] \equiv \bE_y [U] \;\; \text{for any random variable $U$}.
$$

\subsection{$L^q$-H\"{o}lder continuity for squared volatilities}
\label{sec:Holder}
We now prove that the solutions $R_t,V_t$ of Heston SDEs  are H\"{o}lder continuous in $L^q$, with H\"{o}lder exponent $1/2$.
\begin{mypro}\label{Lq Holder}
Let $R_t, V_t$ be the returns rate and squared volatility jointly driven by the Heston SDEs \eqref{HesEq1}, \eqref{HesEq2}, starting at any fixed  $V_0 = y >0$ and $R_0 = r$. Fix  any time $T > 0$. Then $(T, y, r)$ and the Heston SDEs parameters determine for each $q \geq 2$ a constant $C = C(q)$ such that, for all $0 \leq s \leq t \leq T$,
\begin{eqnarray}
\| V_t \|_q \leq C \;\;\; \text{and} \;\;\; \| V_t - V_s \|_q \leq C \sqrt{t-s},
\label{HestonHolder1} \\
\| R_t \|_q \leq C \;\;\; \text{and} \;\;\; \| R_t - R_s \|_q \leq C \sqrt{t-s}.
\label{HestonHolder2}
\end{eqnarray}
Moreover, equations \eqref{HestonHolder1} and \eqref{HestonHolder2} still hold when $V_t$ is the unique stationary process driven by the Heston volatility SDE, and $R_0$ is fixed.
\end{mypro}

\begin{proof}

Consider an increasing filtration $\mathcal{W}_t$ and let $W_t$ be a progressively measurable standard Brownian motion with respect to $\mathcal{W}_t$. Let $\mathcal{X}$ be the set of all continuous processes $\{X_t\}$, which are progressively measurable with respect to $\mathcal{W}_t$ and non anticipative with respect to $W_t$.
A classical martingale inequality (see equation (3.1) in \cite{DavisBurgess}) states that for each positive  $q$, there is a universal constant  $H_q$ such that for all $(X_t) \in \mathcal{X}$ and all $0 \leq s \leq t$,
\begin{equation} \label{mart}
\bE \left[  \left(\int_s^t  X_u dW_u \right) ^q \right] \leq  H_q \bE \left[ \left(\int_s^t  X_u^2 du \right)^{q/2} \right], 
\end{equation}
which, denoting $h_q = H_q^{1/q}$, is clearly equivalent to
\begin{equation} \label{mart1}
\left| \left| \;  \int_s^t  X_u \; dW_u \; \right| \right|_q  \; \leq 
h_q \left( \; \left|\left| \; \int_s^t  X_u^2 \; du \; \right| \right|_{q/2} \; \right) ^{1/2}.
\end{equation}
Fix $q \geq 2$,  $T >0$. Let $R_t, V_t$ be  the solutions of the Heston SDEs starting at any fixed $V_0 >0$ and $R_0$. Denote $\calF_t$ and $\calG_t$ the filtrations respectively  generated by $B_t$ and by the pair $(Z_t,B_t)$. Then the solution $V_t$ of the volatility SDE is $\calF_t$ measurable and non-anticipating with respect to $B_t$. The solution $R_t$ of the returns rate SDE is $\calG_t$ measurable and non-anticipating with respect to $Z_t$. 

Next, 
apply \eqref{mart1} first with $\mathcal{W}_t = \calF_t$, the $W_u=B_u$,  $X_u =  \sqrt{V_u}$ , and then a second time with $\mathcal{W}_t = \calG_t$, $W_u=Z_u$, $X_u =  \sqrt{V_u}$. This yields, for all $0 \leq s \leq t $, all $q \geq 2$, 
\begin{equation} \label{mart2}
\left|\left| \int_s^t  \sqrt{V_u} dB_u \right|\right|_q  \leq 
h_q \left( \left|\left| \; \int_s^t  V_u du  \; \right|\right|_{q/2} \right)^{1/2} \leq 
h_q \left( \int_s^t  \| V_u \|_{q/2} du \right)^{1/2}, 
\end{equation}
\begin{equation} \label{mart3}
\left|\left| \int_s^t  \sqrt{V_u} dZ_u \right|\right|_q  \leq 
h_q \left( \left|\left| \; \int_s^t  V_u du  \; \right|\right|_{q/2} \right)^{1/2} \leq 
h_q \left( \int_s^t  \| V_u \|_{q/2} du \right)^{1/2}.
\end{equation}
By Proposition \ref{cond.moments} which is proved further on, there is a constant $c_q$ such that $ \|V_t \|_q \leq c_q$  for all $t$. Hence \eqref{mart2} and \eqref{mart3} imply, with $k_q= h_q \sqrt{ c_{q/2}}$,
\begin{equation} \label{mart6}
\left|\left| \int_s^t  \sqrt{V_u} dB_u \right|\right|_q   \leq k_q  (t-s) ^{1/2},
\end{equation}
\begin{equation} \label{mart7}
\left|\left| \int_s^t  \sqrt{V_u} dZ_u \right|\right|_q   \leq k_q  (t-s) ^{1/2}.
\end{equation}
Integrating the Heston SDEs  \eqref{HesEq1}, \eqref{HesEq2} we obtain
\begin{equation} \label{Vts}
V_t - V_s = (t-s) \kappa \theta  - \kappa \int_s^t V_u du  + \gamma \int_s^t  \sqrt{V_u} dB_u ,
\end{equation}
\begin{equation} \label{Rts}
R_t - R_s = (t-s) \mu  + \int_s^t  \sqrt{V_u} dZ_u.
\end{equation}
Due to \eqref{mart6}, \eqref{mart7}, this implies for $0 \leq s \leq t \leq T$,
\begin{equation} \label{VtsRts}
\| V_t - V_s \|_q \leq \alpha_q (t-s) ^{1/2} 
\text{~~and~~} 
\| R_t - R_s \|_q \leq \delta_q (t-s) ^{1/2}  
\end{equation}
with 
$$
\delta_q = \sqrt{T} \mu + k_q
\text{~~and~~} 
\alpha_q =  \sqrt{T} \kappa (\theta + c_q) + \gamma k_q.
$$
This proves \eqref{HestonHolder1} for  $V_0 >0$ and $R_0$ fixed. 
A fully similar proof holds when the volatilities  $V_t$ are stationary, with only $R_0$ fixed. 
\qed
\end{proof}

\subsection{Realized Volatilities and actual volatilities} 
\label{RealVol}
Daily or Intraday market data provide observed asset prices $A_t$ at discretized times t, and hence discretized versions of the rate of return $R_t$, but the  spot variances $V_t$ cannot be directly observed or precisely derived from market data and hence are \emph{unobservable}. However \emph{observable approximations} of $V_t$ are provided for each small $\eps >0$ by the \emph{Realized Volatilities} $Y_t^\eps$ computed as follows from the discretized rates of returns $R_s$.

Partition each sliding time window $W_t^\eps = \left[ t- \eps, t \right]$ into $J(\eps)$ equal intervals, we define $J(\eps) +1$ time-instants
$$
t_n = t-\eps + n \eps / J(\eps) \text{~~for~~} n=0, \ldots, J(\eps).
$$
The realized volatilities $Y_t^\eps$ are then computed by the formula 
\begin{equation}
\label{RVformula}
Y_t^\eps = \frac{1}{\eps}\sum_{n=1}^{J(\eps)} (R_{t_n} - R_{t_{n-1}})^2.
\end{equation}
We will \emph{always assume} that the \emph{partition size} $J(\eps)$ verifies
$$
\lim_{\eps \to 0} J(\eps) = + \infty .
$$

\section{$L^q$-approximation of Heston Volatilities by Realized Volatilities}
\label{sec:Lqconverge}
As shown in \cite{Nielsen1}, when $\eps \to 0$, the realized volatilities $Y_t^\eps $ must converge to $V_t$ in $L^2$. For the Heston SDEs we now extend this result to convergence in $L^q$ for all $q \geq 2$, with estimates of the $L^q$ speeds of convergence.
\begin{mytho}
\label{Heston converge}
Fix any starting points $V_0 >0$ and $R_0$ for the returns rate $R_t$ and squared volatilities $V_t$ driven by Heston SDEs \eqref{HesEq1}, \eqref{HesEq2}.
Fix $T >0$ and any \emph{even} integer $q \geq 2$.
Then there is a constant $c(q)$ such that, for any $\eps >0$ and the choice of partition sizes $J(\eps)$, and any $t < T$, the realized volatilities $Y_t^\eps$ defined by \eqref{RVformula} verify
\begin{equation} \label{speedconv}
\|Y_t^\eps - V_t\|_q \leq c \left( \frac{1}{J^{1/q}(\eps)}+ \eps^{1/2} \right) .
\end{equation}
Hence, when $\eps \to 0$ and  $J(\eps) \to \infty$,  
the $Y_t^\eps$ converge to $V_t$ in $L^q$, uniformly over $0\leq t \leq T$.
Moreover if one imposes $J(\eps) > (a / \eps)^{q/2}$ for some fixed $a >0$, one has then
$$
\|Y_t^\eps - V_t\|_q \leq c \left(1+1/\sqrt{a} \right)  \eps^{1/2} .
$$
\end{mytho}
\begin{proof}
Recall a  well known lemma, easily proved by recurrence.
\begin{mylem} \label{ABC}
For each integer $q \geq 1$ and  any random variables $W_1, \ldots, W_q$ in $L^q$, one has
\begin{equation}
\label{EABC}
\big| \bE[W_1 W_2 \ldots W_q] \big| \leq \| W_1\|_q \| W_2 \|_q \ldots \|W_q\|_q.
\end{equation}
\end{mylem}
Fix $T >0, V_0 >0, R_0$. By \eqref{HestonHolder1} and \eqref{HestonHolder2}, for each $q \geq 1$, 
there is a constant $C_q$ such that for $0 \leq s  \leq t \leq T$ one has
\begin{equation}
\label{boundVR}
\| V_t \|_q + \| R_t \|_q \leq C_q \text{~~and~~} \| V_t - V_s\|_q + \| R_t - R_s\|_q \leq C_q (t- s)^{1/2} .
\end{equation}
Assume first that parameter $\mu=0$ in the
Heston SDE \eqref{HesEq1} for the returns.
Then  $R_t - R_s = \int_s^t \sqrt{V_u} dZu$, and by Ito formula, 
$$
(R_t - R_s)^2 = \int_s^t V_u du + 2 \int_s^t \; (R_u-R_s) \sqrt{V_u} \; dZ_u .
$$
Hence, the  variables $D(s,t)$ defined by 
\begin{equation} 
\label{D1}
D(s,t) = (R_t - R_s)^2 - \int_s^t V_u du  
\end{equation}
must verify 
\begin{equation} \label{D2}
\bE(D(s,t) \; | \; \calG_s) = 0 .
\end{equation}
Moreover \eqref{boundVR} gives 
\begin{equation}
\label{Rt-Rs}
\| (R_t -R_s)^2 \|_q =  \left( \| R_t -R_s \|_{2 q} \right)^2  \leq C_{2 q} (t-s)
\end{equation} 
and hence \eqref{D1} yields 
\begin{equation} \label{D3}
\| D(s,t) \|_q \leq \| (R_t -R_s)^2 \|_q + \int_s^t \| V_u \|_q du \leq b_q (t-s)
\end{equation}
with $ b_q = C_{2 q}^2 + C_q$.
For each $\eps > 0$, select the partition size $J(\eps)$, and partition the sliding window  $[t- \eps , t]$  by the time points  $t_n = t-\eps + n \eps/J$, with $n =0, \ldots, J$. Recall formula \eqref{RVformula} for the realized volatilities $Y_t^\eps$ 
$$
Y_t^\eps = \sum\limits_{n=1}^J \; (R_{t_n} - R_{ t_{n-1} })^2.
$$
To study $Y_t^\eps - V_t$ we introduce the decomposition 
\begin{equation}
\label{Y-V}
Y_t^\eps - V_t = H(t, \eps) + K(t, \eps) .
\end{equation}
where the  terms $H(t, \eps)$ and $K(t, \eps)$ are defined as
\begin{equation}
\label{HK}
H(t, \eps) = Y_t^\eps - \frac{1}{\eps}\int\limits_{t-\eps}^t V_u du 
\quad \text{and} \quad 
K(t, \eps) = \frac{1}{\eps}\int\limits_{t-\eps}^t V_u du - V_t .
\end{equation}
We can rewrite $K(t, \eps)$ as 
$$
K(t, \eps) = \frac{1}{\eps}\int\limits_{t-\eps}^t (V_u -V_t) du
$$
and this implies
$$
\| K(t, \eps) \|_q \leq \frac{1}{\eps}\int\limits_{t-\eps}^t \| V_t -V_u \|_q du .
$$
Equation \eqref{boundVR} gives the bound 
$\| V_t -V_u \|_q \leq C_q (t-u)^{1/2}$ 
for $ u \leq t \leq T$, and hence 
\begin{equation}
\label{boundKt}
\| K(t, \eps) \|_q \leq \frac{C_q}{\eps}\int\limits_{t-\eps}^t (t - u)^{1/2} du = \frac{2}{3} C_q \eps^{1/2}.
\end{equation}
We now study $H(t,\eps)$. Define $U_n$ for $n=1, \ldots, J$ by   
\begin{equation}
\label{Un}
U_n = D(t_{n-1} , t_n) = (R_{t_n} - R_{ t_{n-1} })^2 - \int\limits_{t_{n-1}}^{t_n} V_u du .
\end{equation}
Formula \eqref{HK} then implies directly
\begin{equation}
\label{HsumU}
H(t, \eps) = \frac{1}{\eps} \sum_{n=1}^{J} U_n.
\end{equation}
From \eqref{D3} and \eqref{Un} we obtain
\begin{equation}
\label{Unq}
\| U_n \|_q \leq b_q ( t_n - t_{n-1} ) = b_q \frac{\eps}{J}.
\end{equation}
Define  the polynomial $ Q = Q(U_1, \ldots, U_n) $ by
\begin{equation}
\label{defQ}
Q = Q(U_1, \ldots, U_n) = \left(\sum_{n=1}^{J} U_n \right)^q.
\end{equation}
Since $q$ is  \emph{even}, we then get, due to  \eqref{HsumU},  
\begin{equation}
\label{EHq}
| H(t, \eps) |^q = H(t, \eps) ^q = \frac{1}{\eps^q} \bE[Q]  . 
\end{equation}
Next, we derive the bound on $\bE[Q]$ with $Q$ defined in \eqref{defQ}.
To this end, we first define the set 
$M \equiv M(q, J)$ of all multi-integers 
$$
\vec{m} = (m(1), \ldots  , m(q) ) \;\; \text{with all} \;\; m(k) \in \left[ 1, J \right].
$$
For any $\vec{m} \in M$ denote by $Q_{\vec{m}}$ the monomial 
$Q_{\vec{m}} = U_{m(1)} U_{m(2)} \ldots U_{m(q)}$. Then, we can
expand  the polynomial $Q$ as follows
$$
Q = \left( \sum_{n=1}^{J} U_n \right)^q  = \sum_{\vec{m} \in M}  Q_{\vec{m}}.
$$
Then \eqref{EHq} is equivalent to 
\begin{equation}
\label{EEHq}
| H(t, \eps) |^q = \frac{1}{\eps} \sum_{\vec{m} \in M}  \bE[Q_{\vec{m}}]  .
\end{equation}
Due to lemma \ref{ABC} and \eqref{Unq}, we have for all $\vec{m} \in M$,
\begin{equation}
\label{EQm}
\big| \bE(Q_{\vec{m}}) \big| \leq \| U_{m(1)}\|_q \ldots \|U_{m(q)}\|_q \leq 
\left[\;\frac{b_q \eps}{J}\; \right]^q .
\end{equation}

The above bound is sufficient for most  $\vec{m} \in M$ but needs to be refined on a specific subset of $M$. 
So for each $\vec{m} \in M$, let $m^* = \max( m(1), \ldots , m(q) ) $.
Let $z(\vec{m})$ be the number of indices $m(k)$ such that $m(k) = m^*$.
For $1 \leq r \leq q$, call $M_r$ the set of $\vec{m} \in M$ such that $z(\vec{m})=r$. Then $M$ is the union of disjoint subsets $M_r$, $r=1,\ldots,q$.
Each $\vec{m} \in M_r$ contains at most $q - r+1$ distinct indices $m(k)$. Hence each $M_r$ has cardinal $Card(M_r) \leq J^{q - r+1}$. We now consider two cases separately (i) $r \ge 2$ and (ii) $r=1$.

For $r \geq 2$ we can have a an upper bound $Card(M_r) \leq J^{q-1}$, and therefore
$$
Card(M-M_1) = \sum_{r=2}^{q} Card(M_r) \leq (q-1) J^{q-1}.
$$
This yields, in view of \eqref{EQm}, 
\begin{equation}
\label{M-M1}
\left| \sum\limits_{\vec{m} \in (M- M_1)} \bE \left[Q_{\vec{m}} \right] \right| \leq  Card(M-M_1) \left[\; b_q \frac{\eps}{J} \;\right]^q \leq (q-1) b_q^q \eps^q / J.
\end{equation}

We now consider $M_1$ separately.
Fix any  $\vec{m} \in M_1$.
The maximum $m^* = \max (m(1) \ldots m(q))$ is then reached by a single index $i^*$ such that only $m(i^*) = m^*$ and $m(i) < m^*$ for $i \ne i^*$. This implies $2 \leq m^* \leq J$ since  $q \geq 2$. We can then re-order $\vec{m}$ 
as a multi-index $\vec{\nu}$ verifying 
$$
\nu(1) \leq \nu(2) \leq \ldots \leq \nu(q-1) \leq (m^*-1) < \nu(q) =m^*.
$$
Let $s=t_{j-1}$ and $t = t_j$.
For $1 \leq k \leq q-1$ all the $U_{\nu(k)}$ are $\calG_s$-measurable, so that 
$$
\bE [Q_{\vec{m}} | \calG_s] =  U_{\nu(1)} \ldots U_{\nu(q-1)} \bE[U_{\nu(q)} | \calG_s ].
$$
Due to \eqref{D2} and definition \eqref{Un}, we have 
$$
\bE \left[U_{\nu(q)} | \calG_s \right] = \bE \left[ D(s,t) |  \calG_s \right] =0
$$ 
and, therefore, $\bE [Q_{\vec{m}} | \calG_s] = 0$ for $\vec{m} \in M_1$.

Thus, one has
$$
\bE[Q] = \sum\limits_{\vec{m} \in M-M_1} \bE [Q_{\vec{m}}]
$$
and equation \eqref{M-M1} implies
\begin{equation}
\label{EQ}
\big| \bE[Q] \big| = \bE[Q] \leq (q-1) b_q^q \eps^q / J.
\end{equation}
Equations \eqref{EHq} and \eqref{EQ} then yield the bound
\begin{equation}
\label{boundHt}
\| H(t, \eps) \|_q = \frac{1}{\eps}\big| \bE [Q] \big|^{1/q} \leq 
3 b_q / J^{1/q} .
\end{equation}
Combining equation \eqref{Y-V} with the two bounds \eqref{boundHt} and \eqref{boundKt}, we conclude that  for all even $q \geq 2$, all $t \leq T$, and all $\eps >0$, one has 
\begin{equation}
\label{speed}
\| Y_t^{\eps} - V_t \|_q \leq 3 b_q / J(\eps)^{1/q} + C_q \eps^{1/2} .
\end{equation}
Hence when $\lim_{\eps \to 0} J(\eps) = +\infty$, we obtain the $L^q$ convergence  
$$
\lim_{\eps \to 0} \| Y_t^{\eps} - V_t \|_q = 0
$$
and this convergence is uniform for $0 \leq t \leq T$. To minimize the upper bound on the speed of convergence given by \eqref{speed}, one must clearly impose $J(\eps) \sim 1 / \eps^{q/2}$, and then as $\eps \to 0$ one has
$$
\| Y_t^{\eps} - V_t \|_q \sim \eps^{1/2}.
$$

For $\mu \ne 0$, the inequalities on the return process presented in the proof above still hold since we're considering a fixed time 
$T > 0$ and do not strive to have uniform bound in $T$. Consider, for instance, \eqref{Rt-Rs} with $\mu \ne 0$
\[
\| (R_t -R_s)^2 \|_q = \left( \| R_t -R_s \|_{2q} \right)^2 \le 
\left( \mu(t-s) + \left| \left| \int\limits_s^t \sqrt{V_u} dZ_u \right|\right|_{2q} \right)^2
\]
and using \eqref{mart7} we obtain the bound
\[
\| (R_t -R_s)^2 \|_q \le 
\left( \mu(t-s) + k_{2q} (t-s)^{1/2} \right)^2 \le \left( \mu^2 T + k_{2q}^2 + 2 \mu k_{2q} T^{1/2}\right) (t-s) \equiv
\tilde{C}_{T,q} (t-s).
\]
This concludes the proof.
\qed
\end{proof}
%
%%Proof of the Lemma
%%Assume this  holds for some $q \geq 2$. \\
%%Let $p = (q+1)/q$ so that  1/p + 1/(q+1) =1$. Let $X_j, j= 1 \ldots (q+1)$ be any random variables in $L^{q+1}$. Set $P = | X_1  \ldots  X_q |$. \\
%%Since $\| | X_j |^p \|_q =  (\| X_j \|_{q+1})^p$ , the assumed validity of \eqref{EABC}for $q$ yields
%%$$
%%\bE(P^p) \leq \| |X_1|^p \|_q \ldots \| |X_q|^p \|_q = \left[ \| X_1 \|_{q+1} \ldots \| X_q \| \|_{q+1} \right]^p
%%$$
%%and hence $\| P \|_p \leq \| X_1 \|_{q+1} \ldots \| X_q \| \|_{q+1}$.\\
%%The $L^p, L^{q+1}$ duality gives then
%%$$
%%\big| \bE(P X_{q+1}) \big| \leq \| P \|_p \| X_{q+1} \|_{q+1} \leq \| %%X_1 \|_{q+1} \ldots \| X_{q+1} \|_{q+1}
%%$$
%%Hence \eqref{EABC} indeed holds for $q+1$, which concludes the %%recurrence.\\

\section{Observable Estimators for the Heston Model Parameters }
\label{sec4}
\subsection{Parameter Estimation from true volatility data} To fit the Heston model to asset prices data, one needs to estimate the parameters $\mu, \rho$ of the SDE \eqref{HesEq1} and the parameter vector $\btheta$ of the Heston volatility SDE \eqref{HesEq2}. Since the volatility $V_t$ is unobservable, the key issue in parametric estimation of the Heston model is to estimate $\btheta$ (see, e.g. \cite{AG}).
Consider first the ideal but unrealistic case where we are given a large finite set of $N$ true volatilities values $\mathcal {V} = \{ V_{n\Delta} \} $, sub-sampled at time intervals $\Delta$.
For processes driven by smoothly parametrized SDEs, many publications have studied parameter estimation from large data sets \emph{actually generated} by the underlying SDEs 
(see for instance \cite{sahalia02,sahkim07,sahalia08,Basawa1,dusi93,genon1999parameter,gencat90,Phillips2009,Mariani2008,bates2006}, etc.). 
Several of these approaches rely either on Maximum Likelihood Estimators (MLEs) or on Methods of Moments.

\paragraph{Maximum Likelihood Estimators:} For the Heston volatility SDE, the MLEs of $\btheta$ have been thoroughly analyzed in \cite{AG}, where they are explicitly computed from any finite set of true squared volatilities  $\mathcal{V}$. Under minor parameter constraints and as $N \to \infty$, these MLEs were shown to be asymptotically consistent, and asymptotically normal when $\kappa \theta / \gamma^2 > 1$. 
Note that the impact of replacing the unobservable volatilities $V_t$ by the realized volatilities $Y_t^\eps$ in the explicit MLE formulas of \cite{AG} is a quite technical task which we will complete in a future paper. 

\paragraph{Moments based Estimators:} In this paper, we will focus instead on natural \emph{Moment Estimators} $\hat{\btheta}$ of the Heston SDE parameter vector $\btheta $. Since the true squared volatilities $V_t$ are \emph{unobservable}, $\hat{\btheta}$ is constructed as an explicit smooth function of the empirical mean and two lagged empirical covariances of the \emph{observable} realized volatilities $Y_t^\eps$.

\subsection{Parameter estimation under indirect observability}
The Moments Estimators approach considered in this paper falls formally within the generic \emph{Indirect Observability}
framework we introduced in \cite{azreti15}. In this framework, we analyze the \emph{observable} processes $Y_t^\eps $ which, as $\eps \to 0$, converge in $L^4$ to an \emph{unobservable} processes $X_t$ parametrized by a vector $\btheta$.
In particular, in \cite{azreti15}, 
after selecting a number of observables $N(\eps)$ and a sub-sampling rate $\Delta(\eps)$, 
the \emph{observable estimators} $\hat{\btheta}(\eps)$ of $\btheta$ are constructed as smooth functions of the empirical mean and a finite set of empirical lagged covariances of the observables $Y_{n \Delta(\eps)} ^\eps , 1 \leq n \leq N(\eps)$.
Under a broadly applicable set of \emph{Indirect Observability Hypotheses}, which however require $X_t$ to be weakly stationary, we proved in \cite{azreti15} that one can construct observable moment estimators achieving consistency as $\eps \to 0$, provided $N(\eps)$ and $\Delta(\eps)$ are adequately selected.

Here we focus on the following indirect observability situation: \\
(i) the unobservable process $X_t$ is the squared volatility process $V_t$ \\
(ii) the observable $Y_t^\eps$ converging to $V_t$ as $\eps \to 0$ are the realized volatilities defined by the rate of returns process associated to $V_t$.

Note that in the present paper the volatility process $V_t$ starting at a deterministic $V_0 = y >0$ is \emph{not stationary}; therefore, the analytical results of the present paper have requires several quite technical enhancements of the methods previously used in \cite{azreti15}.

\section{Transition Densities for squared volatilities}
\label{sec5}
\subsection{Explicit transition density}
Consider squared volatilities $V_t$ driven by the Heston volatility SDE \eqref{HesEq2} parametrized by $\btheta$. We will always assume that $V_0 >0$ has finite moments of all orders, which is of course the case if $V_0$ is deterministic.
For $T>0 $, introduce the following short-hand notations 
\begin{equation} \label{notations}
\nu_T = e^{- \kappa T}, \qquad r = \frac{2\kappa\theta}{\gamma^2}-1 > 0, \qquad
\Lambda = \frac{2\kappa}{\gamma^2}, \qquad \lambda_T = \frac{\Lambda}{1 - \nu_T}.
\end{equation}
As shown in \cite{Cox}, the Markov diffusion process $V_t$ has an explicit transition density $p_T(z,y)$, which we often denote $p(z,y)$ for short, given by
\begin{equation} \label{cox1985theory}
p_T(z, y) = P(V_{s+T}= z \, | \,V_s=y ) =
\lambda_T \left( \frac{z}{y \nu_T} \right)^{r/2} \exp \left(- \lambda_T (z + y \nu_T) \right) \, 
I_r \left( 2 \lambda_T \sqrt{z y \nu_T } \right), 
\end{equation}
where $I_r$ is the modified Bessel function of the 1st kind of order $r$.
As noted in \cite{Cox}, for fixed $T$, the linear rescaling $V_t \to 2 \lambda_T V_t$ transforms the transition density $p_T(z, y)$ into 
\begin{eqnarray}
P ( 2 \lambda_T V_T = z \, | \, 2 \lambda_T V_0 = y) 
= p_T \left(\frac{z}{2 \lambda_T}, \frac{y}{2 \lambda_T} \right) \frac{1}{2 \lambda_T} = 
\nonumber \\ 
\frac{1}{2} \left( \frac{z} {y \nu_T} \right)^{r/2} \exp( - (z + y \nu_T)/2) \, I_r ( \sqrt{z y \nu_T } ),
\nonumber 
\end{eqnarray}
which, for each fixed $y$, is a \emph{non-central $\chi^2$ density} with non-centrality parameter 
$NCP(T,y) = y \nu_T$ and $DFR = 2r + 2$ degrees of freedom. 
Note that $DFR = {4 \kappa \theta}/{\gamma^2}= 2 \theta \Lambda$.

\subsection{The stationary squared volatility process $V_t$}
\label{stationaryVt}
Since $\nu_T \to 0$ and $\lambda_T \to \Lambda $ as $T \to \infty$, $p_T(z,y)$ converges pointwise, at the speed $e^{- \kappa T}$, to the \emph{unique stationary density} 
$\psi(z)$ of the autonomous Heston volatility SDE. This stationary density is given for all $z > 0 $ by 
\begin{equation} \label{cox}
\psi (z)= \frac{\Lambda}{\Gamma(r+1)}(\Lambda z)^{r} \exp( - \Lambda z ).
\end{equation}
When the initial condition $V_0$ is random with density $\psi$, 
all $V_t$ have then the same density $\psi$, and the process $V_t$ driven by the Heston volatility SDE becomes \emph{strictly stationary}. Expectations with respect to this stationary diffusion will be denoted $\bE_{\psi}$. 

Note, that since $\lim_{T \to \infty} NCP (T,y) = 0$, the linear rescaling $z \to 2 \Lambda z $ transforms the stationary density $\psi(z)$ 
into $({2 \Lambda})^{-1} \psi(z / 2 \Lambda)$ which is a \emph{standard $\chi^2$} density with $DFR = 2r + 2$ degrees of freedom.

\section{Conditional Moments of squared volatilities}
\label{sec:moments}
\subsection{Moments of non-central $\chi^2$} 
\label{sec6.1}
Let $X$ be a random variable having a non-central $\chi^2$ density with $DFR$ degrees of freedom and non-centrality parameter $NCP$. Then, the Laplace transform $Lap(z) = \bE(e^{z X})$ is, for $ 0 \leq z < 1/2$, 
$$
Lap(z) = (1-2z)^{DFR}\exp \left( NCP \frac{z}{1-2z} \right) = (1-2z)^{DFR}\sum_{n=0}^{\infty} \frac{1}{n!}(NCP)^n 
\left( \frac{z}{1-2z} \right)^{n}.
$$ 
Expanding $1/(1-2z)^n$ as a series in $z$, we obtain, for a fixed $DFR$, 
$$
Lap(z) = \sum_{q=0}^{\infty} \pi_q(NCP) \frac{z^q}{q!}, 
$$
where the $\pi_q(NCP)$ are \emph{polynomials} of degree $q$ in $NCP$, with coefficients fully determined by $DFR$ and $q$. 
Denote $nc\chi(q)$ and $st\chi(q)$ the respective moments of order $q$ for the non-central $\chi^2$ density and for the standard $\chi^2$ density with $DFR$ degrees of freedom. We then have the polynomial expressions 
\begin{equation}
\label{Mnoncentral}
nc\chi(q) = \pi_q(NCP) \;\;\; \text{and} \;\;\; st\chi(q) = \pi_q(0).
\end{equation}
For the first two moments of the non-central $\chi^2$ and the standard $\chi^2$, one has, for instance, the following well known formulas
\begin{equation}
\label{firstmoments}
\begin{aligned}
nc\chi(1) &= \pi_1(NCP) = DFR + NCP, \\
nc\chi(2) &= \pi_2(NCP) = NCP^2 + 2 NCP (DFR + 2) + DFR^2 + 2 DFR, \\ 
st\chi(1) &= DFR, \\ 
st\chi(2) &= DFR^2 + 2 DFR. \\
\end{aligned}
\end{equation}

\subsection{Conditional Moments of the squared volatilities}
Recall that $\nu_T = e^{-\kappa T}$ and that $DFR = 2r+2$ is determined by $\btheta$. 
The next proposition addresses the computation of conditional moments for $V_t$
\begin{equation}
\label{Ey}
M_q(y, T) \equiv \bE[V_{s+T}^q \; | \; V_s = y] = \bE[V_T^q \; | \; V_0 = y] \equiv \bE_y[V_T^q].
\end{equation}
\begin{mypro} \label{cond.moments}
For each $q \geq 1$ and for each $y >0$, the conditional moments $\bE_y[V_T^q]$ remain uniformly bounded for all $T \geq 0$.
There is a polynomial $Q_q$ with coefficients depending only on $q$ and $\btheta$, 
such that for all $s$, $T$ and all $y >0$,
\begin{equation}\label{Mpol}
M_q(y, T) \equiv \bE[V_{s+T}^q \; | \; V_s = y] = Q_q(y).
\end{equation}
As $T \to \infty$, moments $ M_q(y, T)$ converge at the exponential speed $\nu_T = e^{- \kappa T}$ 
to finite moments 
$m_q = \bE_\psi [V_t^q]$ of the stationary diffusion $V_t$ driven by the Heston volatility SDE. 
\end{mypro}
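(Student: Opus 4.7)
The plan is to exploit the explicit transition density recalled in Section \ref{sec5} together with the polynomial moment formula \eqref{Mnoncentral}. Conditionally on $V_0 = y$, the rescaled variable $U = 2\lambda_T V_T$ has, by the computation just preceding \eqref{cox}, a non-central $\chi^2$ law with $DFR = 2r+2$ degrees of freedom and with non-centrality parameter $NCP = 2\lambda_T\nu_T\, y$ (the factor $2\lambda_T$ appears because the conditioning is on $2\lambda_T V_0 = 2\lambda_T y$). Applying \eqref{Mnoncentral} then yields the closed form
$$
M_q(y,T) \;=\; (2\lambda_T)^{-q}\,\bE_y\!\left[U^q\right] \;=\; (2\lambda_T)^{-q}\,\pi_q\!\bigl(2\lambda_T\nu_T\, y\bigr).
$$
Because $\pi_q$ is a polynomial of degree $q$ with coefficients determined only by $DFR$ and $q$, and its argument is linear in $y$, the right-hand side is visibly a polynomial $Q_q(y)$ of degree $q$ in $y$ with coefficients depending only on $q$, $\btheta$, and $T$. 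This settles \eqref{Mpol}.

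For the exponential convergence as $T\to\infty$, I would use $\nu_T \to 0$ and $\lambda_T \to \Lambda$ to observe that the argument $2\lambda_T\nu_T y$ tends to $0$ at exponential speed $\nu_T$. Smoothness of the polynomial $\pi_q$ at the origin then gives $\pi_q(2\lambda_T\nu_T y) = \pi_q(0) + O(\nu_T) = st\chi(q) + O(\nu_T)$, while the identity $(2\lambda_T)^{-q} = ((1-\nu_T)/(2\Lambda))^q = (2\Lambda)^{-q} + O(\nu_T)$ produces the clean expansion $M_q(y,T) = (2\Lambda)^{-q}\,st\chi(q) + O(\nu_T)$. To identify the limit with the stationary moment $m_q = \bE_\psi[V_t^q]$, I would invoke the observation at the end of Section \ref{stationaryVt} that $2\Lambda V$ is standard $\chi^2$ with $DFR$ degrees of freedom under $\psi$, so that $m_q = (2\Lambda)^{-q}\,st\chi(q)$ by the same linear rescaling argument.

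For the uniform-in-$T$ bound, the $T\to\infty$ regime is already handled by the convergence above, and $M_q(y,T)$ is clearly continuous in $T$ on $(0,\infty)$ via the closed form. The only delicate region is $T$ near $0$, where $2\lambda_T\nu_T \sim 2\Lambda/(1-\nu_T)$ blows up while $(2\lambda_T)^{-q}$ vanishes. I would expand $\pi_q$ term by term and note that a generic monomial of degree $j\le q$ contributes
$$
(2\lambda_T)^{-q}\bigl(2\lambda_T\nu_T y\bigr)^j \;=\; (1-\nu_T)^{q-j}\,(2\Lambda)^{\,j-q}\,(\nu_T y)^j,
$$
which is manifestly bounded uniformly for $T\in[0,\infty)$. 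At $T=0^+$ only the $j=q$ term survives and gives $y^q = \bE_y[V_0^q]$, providing both continuity at the left endpoint and the desired uniform bound. I do not expect a serious obstacle: the argument is really a careful unpacking of the explicit non-central $\chi^2$ moment formula, and the only point requiring attention is keeping the compensating factor $(1-\nu_T)^{q-j}$ visible so that the cancellation near $T=0$ is transparent.
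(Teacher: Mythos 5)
Your proposal is correct and follows essentially the same route as the paper: identify $2\lambda_T V_T$ given $V_0=y$ as a non-central $\chi^2$ with $DFR=2r+2$ and $NCP=2\lambda_T\nu_T y$, apply the polynomial moment formula \eqref{Mnoncentral} to get $M_q(y,T)=(2\lambda_T)^{-q}\pi_q(2\lambda_T\nu_T y)$, and pass to the limit using $\nu_T\to 0$, $\lambda_T\to\Lambda$ together with the rescaling of $\psi$ to a standard $\chi^2$. Your term-by-term expansion exhibiting the compensating factor $(1-\nu_T)^{q-j}$ is a slightly more explicit version of the paper's bound $M_q(y,T)\leq C(1+y)^q$, and your remark that the coefficients of $Q_q$ also depend on $T$ (through $\nu_T$) is accurate and consistent with the paper's own formula \eqref{Q.pol}.
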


\begin{proof}
The rescaling $V_s \to 2 \lambda_T V_s$, with $\lambda_T$ in \eqref{notations}, transforms the conditional distribution of $V_{s+T}$ given that $V_s = y$ into a non-central $\chi^2$ with 
$$
DFR = 2r +2 = 2 \theta \Lambda \; \; \; \text{and} \; \; \;
NCP = (2 \lambda_T y) \nu_T = \nu_T \frac{2\Lambda}{1-\nu_T} y.
$$ 
This rescaling implies, using the non-central $\chi^2$ moments \eqref{Mnoncentral},
\[
\begin{aligned}
M_q(y, T) &= \frac{1}{(2\lambda_T)^q} \bE \big[ (2\lambda_T V_T)^q \; | \; 2\lambda_T V_0 = 2 \lambda_T y \big] = 
\frac{1}{(2\lambda_T)^q} \pi_q \big( 2\lambda_T y \nu_T \big) \\
&= \left[ \frac{1-\nu_T} {2\Lambda}\right]^q \pi_q \left( y \nu_T \frac{2\Lambda}{1-\nu_T} \right) .
\end{aligned}
\]
Define the homogeneous polynomial $H_q(a, b)$ of total degree $q$ by 
\begin{equation}
\label{Hq}
H_q (a, b) = a^q \pi_q \big( b/a \big), 
\end{equation}
where $\pi_{q}(\cdot)$ is defined by the Laplace transform introduced in section \ref{sec6.1}.
Therefore, coefficients of $H_q$ depend only on $q$ and on $\btheta$. Next, we define 
\begin{equation}
\label{Q.pol}
Q_q(y) = H_q \left( \frac{1-\nu_T}{2\Lambda}, y \nu_T \right).
\end{equation}
Since $0 \leq \nu_T \leq 1$ is a constant given by \eqref{notations}, 
the expression \eqref{Q.pol} for $Q_q(y)$ proves equation \eqref{Mpol}.
Equation \eqref{Mpol} also implies the existence of a constant $C$ such that 
$$
M_q(y,T) \leq C (1 + y)^q \;\; \text{for all} \; \; T \geq 0.
$$ 
Therefore for each $V_0 =y >0$, moments $\bE_y[V_T^q]$ remain bounded for all $T \geq 0$.
As discussed in section \ref{stationaryVt}, rescaling by $\Lambda$ transforms the stationary density $\psi$ into a standard $\chi^2$ distribution, and hence stationary moments 
$$
m_q = \bE_\psi[ V_T^q ] = \int_{y \geq 0} y^q \psi(y) dy 
$$ 
must be finite.
As $T \to \infty$, both $\nu_T= e^{- \kappa T}$ and $NCP$ tend to $0$, and $\lambda_t \to \Lambda$ while DFR remains constant. Hence, due to \eqref{Mpol}, \eqref{Q.pol} the $ M_q(y, T)$ converge at exponential speed $\nu_T$ to 
$$
m_q = \bE_{\psi} [V_s^q] = H_q \left(\frac{1}{2\Lambda}, 0 \right) \equiv \pi_q(0).
$$

\qed
\end{proof}

\subsection{Mean and Covariances of the stationary diffusion $V_t$} \label{covarVt}
Using \eqref{firstmoments}, and the appropriate rescaling of $V_t$ by $2 \lambda_T$ one easily computes the first two conditional moments of the squared volatility process starting at $y>0$, namely
\begin{eqnarray}
M_1(y,T) &=& \bE_y [V_T] = (1-\nu_T) \theta + \nu_T y , 
\label{M1} \\
\nonumber \\ 
M_2(y,T) &=& \bE_y \left[V_T^2 \right] = y^2 \nu_T^2 + 2 y \nu_T (1-\nu_T)(\theta + 1/\Lambda) +(1-\nu_T)^2 
\theta (\theta +1 / \Lambda) , 
\label{M2}
\end{eqnarray}
where $\bE_y [\cdot]$ is the conditional moment defined in \eqref{Ey}.
In particular, as $T \to \infty$, equations \eqref{M1} and \eqref{M2} yield the first two moments of the stationary diffusion $V_t$
\begin{equation} \label{m1m2}
m_1 = \bE_\psi [V_s ] = \theta \quad \text{and} \quad m_2 = \bE_{\psi} \left[ V_s^2 \right] = \theta^2 + \theta / \Lambda.
\end{equation}
Moreover, the stationary diffusion driven by the Heston volatility SDE has mean $m_1 = \theta$ and lagged covariances $K(u) = cov_{\psi} [ V_s V_{s+u} ]$ given by
\[
K(u) + \theta^2= \bE_{\psi} [ V_s V_{s+u}] = \bE_{\psi} \left[ V_s M_1(V_s, u) \right] = \bE_{\psi}\left[ V_s (1-\nu_u) \theta + \nu_u V_s \right] = \theta^2 + \nu_u (m_2 - \theta^2)
\]
for any time lag $u \geq 0$.
This yields the stationary covariances
\begin{equation} \label{covstationary}
K(u) = e^{- u \kappa} \frac{\theta \gamma^2}{2 \kappa} \quad \text{and} \quad 
K(0) = \frac{\theta \gamma^2}{2 \kappa}.
\end{equation}

\subsection{Heston SDE parameters as functions of asymptotic moments}
\label{bthetaformula}
We can now express $\btheta = (\kappa, \theta, \gamma)$ as an explicit \emph{smooth} function 
$$
\btheta = \Phi \left( m_1, K(0), K(u) \right)
$$
of three moments of the \emph{stationary} volatility diffusion $V_t$, namely its mean $m_1$, its variance $K(0)$, and one lagged covariance $K(u)$ for some fixed (but arbitrary) $u >0$. Equations \eqref{m1m2} and \eqref{covstationary} indeed imply that parameters $(\kappa, \theta, \gamma)$ can be expressed using the moments $m_1$, $K(0)$, and $K(u)$ as follows
\begin{equation} \label{par3}
\theta = m_1 = \bE_{\psi} [V_t] \; , \qquad
\kappa = - \frac{1}{u} \log\left(\frac{K(u)}{K(0)}\right) , \qquad 
\gamma^2 = \frac{2 K(0) \kappa}{ \theta},
\end{equation}
which defines the function $\Phi$ above.

\section{Moments based observable estimators}
\label{sec7}
We now use our preceding results on the Heston volatility SDEs 
to study a class of moment-based estimators of the Heston parameters and to discuss their consistency when the observable data are generated by the realized volatilities.

\subsection{Computation of Moments Based Observable Estimators} 

Given the window-size $\eps>0$, select a sub-sampling time interval $\Delta \equiv \Delta(\eps) $ and a number of observations $N \equiv N(\eps)$.
Then, the realized volatility process \eqref{RVformula} generates an observable data set of $N(\eps)$ realized volatilities 
$$
W_k= Y^\eps_{k \Delta(\eps)}, \;\; k=1 \ldots N(\eps).
$$
Next, we specify how we use these $N(\eps)$ observable data to estimate any lagged covariance $K(u)$ of the stationary diffusion $V_t$. 
Denoting $\left[ a \right]_{int}$ the closest integer to $a$, we approximate the lag $u$ by $U \Delta(\eps)$ where 
\begin{equation}\label{timelagU}
U = U(u, \eps) = \left[ \frac{u}{\Delta(\eps)} \right]_{int} \;\; \text{so that} \;\; |u - U \Delta(\eps) | \leq \Delta(\eps).
\end{equation}
Since $K(u)$ is Lipschitz in $u$, there is a constant $C \equiv C(u)$ such that
$$
|K(u) - K(U \Delta(\eps)) | \leq C \Delta(\eps) \;\; \text{for all} \;\; \eps >0.
$$ 
Then, for any $\eps$ and time lag $u$, we define observable empirical estimators of the mean $m_1$ and lagged covariances 
$K(u)$ of the stationary diffusion $V_t$ as follows
\begin{equation} \label{empirical}
\hat{m}^\eps = \frac{1}{N}\sum\limits_{k=1}^{N} W_k , \quad
\hat{K}(u) \equiv \hat{K}^\eps(u) = - (\hat{m}^\eps)^2 + \frac{1}{N-U} \sum\limits_{k=1}^{N-U} W_k W_{k+U} , 
\end{equation}
where $ U = U(u, \eps) $ and $N = N(\eps)$.
Formulas \eqref{par3} express the parameter vector $\btheta$ as an explicit $C^1$ function $\Phi(m_1, K(0), K(u))$. This naturally leads to the definition of an observable parameter estimator $\hat{\btheta}(\eps) $ of $\btheta$ by
$$
\hat{\btheta}(\eps) = \Phi(\hat{m}^\eps, \hat{K}^\eps (0), \hat{K}^\eps (u)).
$$
This definition yields the following explicit observable estimators of the Heston parameters
\begin{equation}\label{estim}
\hat{\theta}(\eps) = \hat{m}^\eps , \qquad
\hat{\kappa}(\eps) = 
-\frac{1}{u}\log\left( \frac{\hat{K}^\eps(u)}{\hat{K}^\eps(0)} \right) , \qquad
\hat{\gamma}^2 (\eps)= \frac{2 \hat{K}^\eps(0) \hat{\kappa}(\eps)}{\hat{\theta}(\eps)}.
\end{equation}
\subsection{Asymptotics for polynomial functionals of squared volatilities}
\begin{mytho}\label{polymoments}
Consider any fixed polynomial $h(x_1, \ldots, x_k)$ of total degree $n$ in $k$ variables $(x_1, \ldots, x_k)$. 
Let $0 = u(0) < u(1) < \ldots < u(k)$ be any sequence of $k+1$ lag instants.
For $T >0$, define $H$ and $H_T$ by
$$
H= h \left( V_{u(1)}, \ldots, V_{u(k)} \right) \;\;\; \text{and} \;\;\;
H_T= h \left( V_{u(1)+T}, \ldots, V_{u(k)+T} \right) .
$$
Recall that $\nu_T = e^{- \kappa T}$. Define 
$w_j = e^{ - \kappa (u(j+1) - u(j))}$ for $j = 0, \ldots, k-1$. Then, there is a polynomial $POL$ in $k+2$ variables such that for all $T>0$ and all $y>0$
\begin{equation}\label{EHT}
\bE_y (H_T) = POL(\nu_T, y \nu_T, w_0, w_1, \ldots , w_{k-1}) .
\end{equation}
The degree and coefficients of POL are determined by the integers $n$, $k$, the coefficients of $h$, 
and the vector $\btheta$. 
The asymptotic polynomial moments are then given by
$$
\lim_{T \to \infty} \bE_y (H_T) = \bE_\psi (H) = POL(0, 0, w_0, w_1, \ldots , w_{k-1}).
$$
For any integer $q \geq 1$ there is a positive constant $C$, and an integer $p \geq 1$, determined only by 
$q$, $k$, $\btheta$ and the polynomial $h$ such that, for all positive $T$ and $y$, and all $0 = u(0) < u(1) < \ldots < u(k)$
\begin{equation}\label{Lqlimitpol}
\Big| \bE_y \big[ \left( H_T - \bE_\psi (H) \right)^q \big] \Big| \leq C (1 + y^p) e^{- \kappa T} .
\end{equation}
In particular for $q= 1$ one has
\begin{equation}\label{limitpol}
\Big| \bE_y (H_T) - \bE_\psi (H) \Big| \leq C (1 + y^p) e^{- \kappa T}.
\end{equation}
\end{mytho}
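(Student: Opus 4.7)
The plan is to first establish the polynomial representation \eqref{EHT} by iterated conditioning, and then derive the exponential bounds \eqref{limitpol} and \eqref{Lqlimitpol} as direct consequences of that representation.

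By linearity of expectation in the coefficients of $h$, it suffices to treat a single monomial $h(x_1,\ldots,x_k)=x_1^{q_1}\cdots x_k^{q_k}$; the general case follows by summing over monomial terms. For a monomial I would use the Markov property of $V_t$ together with the tower property, conditioning first on $\calF_{u(k-1)+T}$ to peel off the last factor:
$$
\bE_y\big[V_{u(1)+T}^{q_1}\cdots V_{u(k)+T}^{q_k}\big]
= \bE_y\Big[V_{u(1)+T}^{q_1}\cdots V_{u(k-1)+T}^{q_{k-1}}\,M_{q_k}\big(V_{u(k-1)+T},\,u(k)-u(k-1)\big)\Big].
$$
By Proposition \ref{cond.moments}, $M_{q_k}(z,u(k)-u(k-1))$ is a polynomial in $z$ whose coefficients are polynomials in $w_{k-1}=e^{-\kappa(u(k)-u(k-1))}$ (via \eqref{Q.pol}). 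Substituting, the right-hand side becomes a sum of monomials in $V_{u(1)+T},\ldots,V_{u(k-1)+T}$ with coefficients polynomial in $w_{k-1}$. I would iterate this backward through $j=k-1,\ldots,2$, absorbing one $w_j$ at each step. The final conditioning on $V_0=y$ invokes Proposition \ref{cond.moments} with time horizon $u(1)+T$; since $\nu_{u(1)+T}=\nu_T\,w_0$, this introduces precisely $\nu_T$ and $y\nu_T$ alongside $w_0$, and the assembled expression is the desired $POL(\nu_T,y\nu_T,w_0,\ldots,w_{k-1})$.

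For \eqref{limitpol}, I would evaluate $POL$ at $(\nu_T,y\nu_T)=(0,0)$: since by Proposition \ref{cond.moments} the limit as $T\to\infty$ of each conditional moment is the stationary moment, and since the iterated representation depends continuously on $\nu_T$, the limit $POL(0,0,w_0,\ldots,w_{k-1})$ must equal $\bE_\psi(H)$. The difference then factors as $\nu_T$ times a polynomial $R(\nu_T,y\nu_T,w_0,\ldots,w_{k-1})$; using $0\le \nu_T,w_j\le 1$ and bounding $y^j$-terms by $1+y^p$ with $p$ the total $y$-degree of $POL$ yields the claimed bound $C(1+y^p)e^{-\kappa T}$.

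For \eqref{Lqlimitpol} I would apply the polynomial representation to the composite polynomial $\tilde h=(h-\bE_\psi H)^q$ of degree $nq$ in the same $k$ variables, and repeat the same extraction of the $\nu_T$ factor from the resulting representation $POL_q(\nu_T,y\nu_T,w_0,\ldots,w_{k-1})$. The main obstacle is bookkeeping: one must verify that at every step of the backward iteration the polynomial structure in $(\nu_T,y\nu_T,w_0,\ldots,w_{k-1})$ is preserved, and that the final $y$-degree $p$ and coefficient bound $C$ depend only on $q$, $k$, $\btheta$, and the coefficients of $h$, uniformly in the lag configuration $0=u(0)<\cdots<u(k)$. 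Proposition \ref{cond.moments} supplies exactly what is needed here: each $Q_{q_j}$ has $y$-degree $q_j$ independent of the time gap $u(j)-u(j-1)$, and composing $k$ such polynomials produces a final $y$-degree determined entirely by $h$, while the $w_j$ remain in $[0,1]$ and therefore contribute only universal bounded prefactors.
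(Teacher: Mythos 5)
Your proposal is correct and follows essentially the same route as the paper: reduction to monomials by linearity, iterated conditioning via the Markov property and the polynomial one-step conditional moments of Proposition \ref{cond.moments} (you peel off the latest time first, while the paper inducts on $k$ by conditioning on $\calF_{u(1)+T}$ first, but the mechanism and the way $\nu_T$, $y\nu_T$ and the $w_j$ enter are identical), followed by extraction of a factor $\nu_T$ from $POL(z(T))-POL(z(\infty))$ with the variables $\nu_T,w_j\in[0,1]$ giving uniformity in the lags. Your treatment of \eqref{Lqlimitpol} through the single composite polynomial $(h-\bE_\psi H)^q$ is a harmless variant of the paper's binomial expansion into the terms $\bE_\psi(H)^{q-j}H_T^{j}$, and — exactly like the paper's own argument — what it actually establishes is the bound on $\big|\bE_y\big[(H_T-\bE_\psi H)^q\big]-\bE_\psi\big[(H-\bE_\psi H)^q\big]\big|$ rather than on the literal left-hand side of \eqref{Lqlimitpol}, which for even $q\ge 2$ converges to the strictly positive stationary central moment and so cannot itself be $O(e^{-\kappa T})$.
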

\begin{proof}
For better readability, the detailed proof is given in Appendix \ref{polyfunctionals}.
\end{proof}

\noindent
\textbf{Remarks.} Equation \eqref{Lqlimitpol} also implies that as $T \to \infty$, the random polynomial functions $H_T$ converge in $L^q$-norm to the constants $\bE_\psi(H)$, where $L^q$-norms are computed under $\bE_y$. Note that the constant $C$ introduced in the theorem does not depend on the time lags $u(0) < u(1) < \ldots < u(k)$.

\subsection{Consistency of observable estimators}
\label{sec:consist}
Since $\hat{\btheta}(\eps)$ is a $C^1$ function of three specific empirical moments of realized volatilities, 
the key consistency issue is to estimate, as $\eps \to 0$, the speeds of convergence of $\hat{m}^\eps$ to $m_1$ and $\hat{K}^\eps (u)$ to $K(u)$.
These speeds of convergence strongly depend on the sub-sampling scheme defined by $N(\eps)$ and $\Delta(\eps)$. In \cite{azreti15}, we have determined sub-sampling schemes optimizing these speeds of convergence for \emph{stationary} unobservable limit processes.
We now prove similar results for the \emph{non-stationary} volatilities driven by the Heston SDEs.
\begin{mytho} \label{consistency} 
Consider an asset with return rate $R_t$ and squared volatility $V_t$, jointly driven by the Heston SDEs \eqref{HesEq1}, \eqref{HesEq2}. Fix deterministic initial conditions $R_0$ and $V_0 = y >0$. Call $P_y$ the probability distribution in path space of the trajectories $\{R_t, V_t\}$.
Realized volatilities $Y_t^{\eps}$ are computed by formula \eqref{RVformula} with $J(\eps) \sim \eps^{-2}$.
The $Y_t^{\eps}$ are sub-sampled with time step $\Delta(\eps)$ to generate $N(\eps)$ observations $W_k=Y_{k\Delta(\eps)}^\eps$. 
We then apply formulas \eqref{empirical} to compute observable empirical estimators $\hat{K}^\eps(u)$ and $\hat{m}^\eps$ of the asymptotic lagged covariances $K(u) - \theta^2 = \lim_{t\to \infty} \bE [V_t V_{t+u}]$ and mean $m_1= \lim_{t\to \infty} \bE [V_t]$ of true volatilities.

Then
there exists a sub-sampling scheme which
guarantees that for any fixed positive $L$ and $y$ there is a constant $C=C(L, y,\btheta)$ such that, for all lags $0 \leq u \leq L$, one has 
\begin{equation}
\label{speedCovar}
\| \hat{K}^\eps (u) - K(u) \|_2 \leq C \eps^{1/2} \;\;\; \text{and} \;\;\;
\| \hat{m}^\eps - m_1 \|_2 \leq C \eps ^{1/2}, 
\end{equation}
where $L^2$-norms are computed with respect to $P_y$. Moreover, under $P_y$ the observable parameters estimators $\hat{\btheta}^\eps$ given by formulas \eqref{estim} converge in probability to the true Heston parameters $\btheta$ as $\eps \to 0$. One has, for an adequate constant $C$,
\begin{equation}
\label{convProba}
P_y \left( \| \hat{\btheta}^\eps - \btheta \|_{\bR^3} \geq \eps^{1/3} \right) \leq C \eps^{1/3}.
\end{equation}
\end{mytho}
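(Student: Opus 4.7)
The plan is to combine the control of the approximation $Y_t^\eps \approx V_t$ (from Theorem \ref{Heston converge}) with the control of ergodic averages of $V_t$ as its distribution relaxes to the stationary density $\psi$ (from Theorem \ref{polymoments}). I will decompose both $\hat m^\eps - m_1$ and $\hat K^\eps(u) - K(u)$ into three pieces: an \emph{approximation error} from substituting realized volatilities for true volatilities; an \emph{ergodic sampling error} comparing empirical moments of the (non-stationary) process $V_t$ to their stationary limits; and a deterministic \emph{lag-rounding error} $|K(u) - K(U\Delta(\eps))| \leq C\Delta(\eps) = O(\eps^{1/2})$, which is immediate from Lipschitz continuity of $K$. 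Once \eqref{speedCovar} is in hand, the probabilistic bound \eqref{convProba} will follow routinely from Chebyshev plus the $C^1$-smoothness of the map $\Phi$ in \eqref{par3}.

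For the approximation error, note that the prescription $J(\eps) \sim \eps^{-2}$ is tailored so that Theorem \ref{Heston converge} (with the even integer $q=4$) delivers the uniform bound $\|Y_t^\eps - V_t\|_4 \leq C\eps^{1/2}$. Minkowski's inequality then immediately yields $\|\hat m^\eps - \bar V\|_2 \leq C\eps^{1/2}$, where $\bar V = N^{-1}\sum_k V_{k\Delta}$. For the product terms $Y_{k\Delta}^\eps Y_{(k+U)\Delta}^\eps - V_{k\Delta} V_{(k+U)\Delta}$ entering $\hat K^\eps(u)$, I will apply Cauchy--Schwarz in $L^4$, exploiting the uniform $L^4$ bound on $V_t$ provided by Proposition \ref{holderHeston} together with the $L^4$ bound on $Y_t^\eps$ inherited from the same theorem.

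The main obstacle is the ergodic sampling step, where the non-stationarity of $V_t$ (starting at deterministic $V_0 = y$) must be carefully handled. Here I will apply Theorem \ref{polymoments} to the monomials $h(x) = x$ and $h(x_1,x_2) = x_1 x_2$ (and their suitable centerings) to obtain bounds of the form $|\bE_y[V_{k\Delta}^a V_{(k+U)\Delta}^b] - \bE_\psi[V_0^a V_U^b]| \leq C(1+y^p)e^{-\kappa k\Delta}$. Summing these over the $N \sim \eps^{-1}$ sub-sampled indices gives a bias of order $1/(N\Delta) = \eps^{1/2}$. For the variance, I will expand $\mathrm{Var}_y(\bar V) = N^{-2}\sum_{j,k}\mathrm{Cov}_y(V_{j\Delta}, V_{k\Delta})$ and use Theorem \ref{polymoments} (applied to $x_1 x_2$) to show that each covariance is exponentially small in $|j-k|\Delta$ up to a non-stationarity correction exponentially small in $\min(j,k)\Delta$. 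The analogous treatment for the four-variable monomial $x_1 x_2 x_3 x_4$ will control the $L^2$ error of the empirical lagged second moment, yielding $\|\tilde K(u) - K(U\Delta)\|_2 \leq C\eps^{1/2}$ for the auxiliary estimator $\tilde K$ computed from the true volatilities. The delicate point is to propagate the exponentially small non-stationarity corrections through the double sums over $j, k \leq N$ without losing the scaling prescribed by $N\Delta \sim \eps^{-1/2}$.

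Given \eqref{speedCovar}, the bound \eqref{convProba} is a routine consequence. The map $\Phi$ in \eqref{par3} is $C^1$ on the open set where $\theta > 0$, $K(0) > 0$, and $K(u)/K(0) > 0$, hence locally Lipschitz near the target $(m_1, K(0), K(u))$. Chebyshev's inequality bounds each of the three events $\{|\hat m^\eps - m_1| \geq \eps^{1/3}\}$, $\{|\hat K^\eps(0) - K(0)| \geq \eps^{1/3}\}$, $\{|\hat K^\eps(u) - K(u)| \geq \eps^{1/3}\}$ by $C\eps/\eps^{2/3} = C\eps^{1/3}$. On the complement of the union of these events, local Lipschitz continuity of $\Phi$ yields $\|\hat{\btheta}^\eps - \btheta\|_{\bR^3} \leq C'\eps^{1/3}$, which establishes \eqref{convProba}.
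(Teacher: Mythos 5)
Your overall architecture --- an approximation error from replacing $V$ by $Y^\eps$ controlled via the $L^4$ bound with $J(\eps)\sim\eps^{-2}$, a relaxation-to-stationarity error for the sub-sampled true volatilities, a Lipschitz lag-rounding error $O(\Delta)$, and finally Chebyshev plus $C^1$-smoothness of $\Phi$ for \eqref{convProba} --- coincides with the paper's, and those pieces are sound. The divergence, and the gap, is in the ergodic-sampling step. The paper does \emph{not} perform a bias--variance decomposition there. It invokes Theorem \ref{polymoments}, namely \eqref{Lqlimitpol} with $q=2$, to assert that each \emph{individual} sampled product satisfies $\|V_{j\Delta}V_{j\Delta+u}-m_2(u)\|_2\le C e^{-\kappa j\Delta/2}$, i.e.\ converges in $L^2$-norm (not merely in expectation) to the deterministic constant $m_2(u)$ as the elapsed time grows; the triangle inequality over the $N$ terms then yields $\|M(V)-m_2(U)\|_2\le N^{-1}\sum_j Ce^{-\kappa j\Delta/2}\le C/(N\Delta)\sim\eps^{1/2}$. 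The entire burden of the rate is carried by that term-wise estimate; no covariance expansion appears.

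Your route cannot reach the claimed rate. The bias is indeed $O(1/(N\Delta))=O(\eps^{1/2})$, but the fluctuation of an ergodic average over total time $N\Delta\sim\eps^{-1/2}$, with covariances decaying like $e^{-\kappa|j-k|\Delta}$, satisfies $\mathrm{Var}_y(\bar V)=N^{-2}\sum_{j,k}\mathrm{Cov}_y(V_{j\Delta},V_{k\Delta})\asymp 1/(N\Delta)\sim\eps^{1/2}$, whose square root is $\eps^{1/4}$, not $\eps^{1/2}$. This is not a matter of handling the non-stationarity corrections more carefully (those contribute terms $e^{-\kappa\min(j,k)\Delta}$ whose double sum is also $O(N/\Delta)$): the diagonal alone gives $\mathrm{Var}_y(\bar V)\gtrsim 1/N$ and the near-diagonal band gives $\gtrsim c/(N\Delta)$, because $\mathrm{Var}_y(V_{j\Delta})\to K(0)>0$. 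So your plan, carried out as written, proves only $\|\hat m^\eps-m_1\|_2=O(\eps^{1/4})$, and likewise $O(\eps^{1/4})$ for the lagged second moments --- short of \eqref{speedCovar}. The ``delicate point'' you flag is therefore the crux, and it cannot be closed within the bias--variance framework under the scheme $N\Delta\sim\eps^{-1/2}$. You should also note that your (correct) variance computation is in direct tension with the paper's key step \eqref{bnd1}: the quantity $\bE_y\big[(V_TV_{T+s}-m_2(s))^2\big]$ converges as $T\to\infty$ to the strictly positive stationary variance of $V_0V_s$, so it cannot be bounded by $Ce^{-\kappa T}$ uniformly in $T$. Reconciling this --- e.g.\ by enlarging the observation horizon to $N\Delta\sim\eps^{-1}$, under which your decomposition does deliver $\eps^{1/2}$ --- is the real issue, and your write-up neither resolves it nor identifies it explicitly.
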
 
\begin{proof}
Fix the time lag $u$ and $V_0 = y >0$. All $L^q$-norms are computed under $P_y$. The notation ``constant $C$'' will designate a generic constant which can change values from one bound to another.
By Theorem \ref{Heston converge}, there is a constant $c_4$ such that for all $t$, 
\begin{equation}
\label{bndbase}
\| V_t \|_4 \leq c_4 \;\; \text{and} \;\;
\| V_t - Y_t^\eps \|_2 \leq \| V_t - Y_t^\eps \|_4 \leq c_4 \eps^{1/2}.
\end{equation}
Therefore, there is a constant $c_2$ such that for all $s$ and $t$,
$$
\| V_s V_t - Y_s^\eps Y_t^\eps \|_2 \leq c_2 \eps^{1/2}.
$$
Denote $\Delta \equiv \Delta(\eps)$. 
The sub-sampled  realized volatilities $W_k=Y_{k\Delta}^\eps$ determine the observable empirical estimators of 1st and 2nd moments of volatilities, through formula \eqref{empirical}. 
Since $\| V_{k \Delta} - W_k \|_2 \leq c_4 \eps^{1/2}$ by \eqref{bndbase}, the definition \eqref{empirical} of $\hat{m}^\eps $ gives 
\begin{equation}
\label{bnd0}
\| \hat{m}^\eps - m_1 \|_2 \leq \frac{1}{N} \sum_{k=1}^{N} \left|\left| W_k - V_{k \Delta} \right|\right|_2 + 
\left| \left| \frac{1}{N} \sum_{k=1}^{N} V_{k \Delta} - m_1 \right|\right|_2 
\leq c_4 \eps^{1/2} + \frac{c_5}{\sqrt{N\Delta}},
\end{equation}
where we used \eqref{covstationary} which implies that 
\[
\sum\limits_{j=1}^\infty (V_{k \Delta} - m_1)  (V_{(k+j) \Delta} - m_1) \le Const < \infty.
\]
We would like to point out that the term $O(1/\sqrt{N\Delta})$ 
in the expression above arises from the $L^2$ error of the empirical mean, $m_1$, computed form direct observations, $V_{k\Delta}$. Therefore, the estimate in \eqref{bnd0} cannot be improved analytically.
 
Provided we take $N \Delta=\eps^{-1}$,
this proves convergence, at speed $\eps^{1/2}$, of the empirical mean of realized volatilities $\hat{m}^\eps$ to the asymptotic  mean of $V_t$. 
Next, we study our estimators of lagged covariances.
Let $U$ be the closest integer to $\left[ u / \Delta(\eps) \right]_{int}$,  so that $| U \Delta - u \Delta | \leq \Delta$ . Define
$$
H(W) = \hat{K}^\eps(u) + \left( \hat{m}^\eps \right)^2 = \frac{1}{N} \sum_{k=1}^{N} W_kW_{k+U} 
\;\;\; \text{and} \;\;\;
M(V) = \frac{1}{N} \sum_{k=1}^{N} V_{k \Delta} V_{(k+U) \Delta}.
$$
By subadditivity of norms, we obtain 
\begin{equation}
\label{bnd00}
\| H(W) - M(V) \|_2 \leq c_2 \eps^{1/2}.
\end{equation}
Let $\psi$ be the \emph{asymptotic} density of $V_t$, and denote by 
$$
m_2(s) = \bE_\psi [ V_t V_{t+s} ]
$$ 
the stationary lagged 2nd moments, which do not depend on $t$. By Theorem \ref{polymoments}, for every fixed $a$ and $y >0$ there is a constant $C$ such that for all $T$ and all $s\leq a$ one has the bound
\begin{equation}
\label{bnd1}
\bE_y \left[ (V_T V_{T+s} - m_2(s))^2 \right] \leq C e^{- \kappa T}.
\end{equation}
The $L^2$ norm under $P_y$ hence verifies 
\begin{equation}
\label{bnd2}
\| V_{j \Delta} V_{j \Delta + U} - m_2(U) \|_ 2 \leq C e^{- \frac{\kappa}{2} j \Delta}. 
\end{equation}
The above two bounds in \eqref{bnd1} and \eqref{bnd2} 
provide constants $C$ and $c= \kappa/2$ such that for all $\eps$
\begin{equation}
\label{ineq1}
\sum_{j =1}^N \| V_{j \Delta} V_{j \Delta + U} - m_2(U) \|_ 2 \leq C \frac{ e^{- c \Delta} }{1 - e^{- c \Delta}}
\leq \frac{ C}{ c \Delta}.
\end{equation}
By subadditivity of $L^2$ norms, inequality \eqref{ineq1} then implies, 
\begin{equation}
\label{bnd3}
\| M(V) - m_2(U) \|_2 \leq \frac{C}{c N\Delta}.
\end{equation}
Regrouping our definitions and notations, we have 
$$
\hat{K}^\eps(u) = - (\hat{m}^\eps)^2 +H(W), \quad 
K(u) = - m_1^2 + m_2(u), \quad 
K(U) = -m_1^2 + m_2(U).
$$
This implies since $K(u)$ is Lipschitz in $u$, 
\begin{equation}
\label{bnd4}
| m_2(U) - m_2(u) | = | K(U) - K(u) | \leq C |U-u | \leq C \Delta.
\end{equation}
We have the obvious identity
$$
\hat{K}^\eps(u) - K(u) =H(W) - (\hat{m}^\eps)^2 - \left( m_2(u) - m_1^2 \right) - M(V) +M(V) - m_2(U) + m_2(U)
$$
and hence 
$$
\| \hat{K}^\eps(u) - K(u) \|_2 \leq \| m_1^2 - (\hat{m}^\eps)^2 \|_2 + \| H(W) - M(V) \|_2 + \| M(V) - m_2(U) \|_2 + |m_2(U) - m_2(u)|.
$$
We now use the bounds \eqref{bnd0}, \eqref{bnd00}, \eqref{bnd3}, and \eqref{bnd4} to obtain 
\begin{equation}
\label{L2Kufinal}
\| \hat{K}^\eps(u) - K(u) \|_2 \leq c_4 \eps^{1/2} +\frac{c_5}{\sqrt{N\Delta}}+
c_2 \eps^{1/2} + \frac{C}{\sqrt{N}} + \frac{C}{N\Delta} + C \Delta.
\end{equation}
To optimize this last bound and ensure that all terms have the same rate of convergence as $\eps \to 0$, we impose the choice 
\begin{equation}
\label{optimal1}
\frac{1}{\sqrt{N\Delta}} \sim \Delta \sim \eps^{1/2}
\end{equation}
which is equivalent to selecting $\Delta \sim \eps^{1/2}$ and $N \sim \eps^{-3/2}$. 

Therefore, for each fixed $V_0=y >0$ and for all time lags $u$ within any fixed interval $\left[ 0, L \right]$ there is a constant $C$ realizing the bound 
$$
\| \hat{K}^\eps(u) - K(u) \|_2 \leq C \eps^{1/2}.
$$
The $L^2$ convergence under $P_y$ of $\hat{K}^\eps(u)$ to $K(u)$ and of $\hat{m}^\eps$ to $m_1$, implies their convergence in probability under $P_y$. By equation \eqref{estim} our estimators of Heston parameters are of the form 
$$
\hat{\btheta}^\eps = \Phi(\hat{K}^\eps(0), \hat{K}^\eps(u), \hat{m}^\eps)
$$
where $\Phi$ is a $C^1$ function. Thus, estimators $\hat{\btheta}^\eps$ converge in probability to 
$\btheta =$ $\Phi(K(0), K(u), m_1)$ as $\eps \to 0$.
The $L^2$-speeds of convergence $\eps^{1/2}$ for the 1st and 2nd moments imply, by Chebyshev inequality,
$$
P_y \left( |\hat{K}^\eps(u) - K(u)| \geq \eps^{1/3} \right) \leq C \eps^{1/3}
$$
with a similar inequality for $\hat{m}^\eps$. 
Since $\Phi$ is $C^1$, these speeds of convergence in probability under $P_y$ imply, by the first order Taylor expansion of the function $\Phi$, the same speed of convergence in probability for the parameter estimators themselves.

\qed
\end{proof}

%%%%%%%%%%%%%%%%%%
\section{Asymptotically Optimal Partition Sizes $J(\eps)$}
\label{sec8}

Denote by $T(\eps)$ the \emph{total observation time} available for the rate of returns process $R_t$ The realized volatilities $Y^\eps_t$ given by formula \eqref{RVformula} involve subdividing the sliding window $(t -\eps, t)$ into $J(\eps)$ intervals and averaging the corresponding $J(\eps)$ squared increments of the returns rate. To compute our 1st and 2nd moments estimators from the observable process $Y^\eps_t$ with $t \leq T(\eps)$, we sub-sample this  process at the $N(\eps)$ instants $k \Delta(\eps)$, with $k = 1, \ldots, N(\eps)$, with the obvious relation 
$N(\eps) \Delta(\eps) = T(\eps)$.

Provided one uses the subsampling scheme
\begin{equation}\label{subs} 
N(\eps) \sim \eps^{-3/2}, \qquad \Delta(\eps) \sim \eps^{1/2}
\end{equation}
and a partition size $J(\eps) \sim 1/\eps^2$, our current theoretical bounds can guarantee $L^4$ speeds of convergence $\sim \sqrt{\eps}$ for $Y^\eps_t - V_t$ and consistency in probability of our observable estimators for the parameters in the volatility Heston SDE. 
%Note that our asymptotically optimized subsampling scheme \eqref{subs} necessitates a total observation time 
%$$
%T(\eps) \sim 1 / \eps.
%$$
The theoretical choice $J(\eps) \sim 1/\eps^2$ seems overwhelmingly large for concrete fitting of Heston SDEs to actual stockprices data. 
Therefore, we also examine numerically a
more pragmatic choice $ J(\eps) \sim \eps^{-1}$.

\noindent
%\red{
\textbf{Remark.} The optimized sub-sampling scheme \eqref{subs} necessitates a total observational time $T(\eps) \sim \eps^{-1}$. The associated convergence rate \eqref{speedCovar} for the 1st and 2nd moments estimators can hence be reformulated as
\[
\| \hat{K}^\eps (u) - K(u) \|_2 \leq \frac{C}{\sqrt{T(\eps)}} \;\;\; \text{and} \;\;\;
\| \hat{m}^\eps - m_1 \|_2 \leq \frac{C}{\sqrt{T(\eps)}}.
\]
However, to compute realized volatilities $Y_t^{\eps}$, we need $J(\eps) \sim 1 /\eps^{2}$ time points in each  sliding  window, and hence 
the computation of the observable moments estimators requires a total number of observational points 
$$
n \equiv n(\eps) =N(\eps) J(\eps) \sim 1/ \eps^{7/2}.
$$ 
Therefore, the convergence rates of our moments estimators can be expressed as follows in terms of the total number $n$ of observational time points for the return process as
\[
\| \hat{K}^\eps (u) - K(u) \|_2 \leq C  n^{-1/7} \;\;\; \text{and} \;\;\;
\| \hat{m}^\eps - m_1 \|_2 \leq C  n^{-1/7}.
\]
Even the more pragmatic choice  $J(\eps) =1 / \eps$ results in the scaling
\[
\| \hat{K}^\eps (u) - K(u) \|_2 \leq C  n^{-1/5} \;\;\; \text{and} \;\;\;
\| \hat{m}^\eps - m_1 \|_2 \leq C  n^{-1/5}.
\]
These upper bounds on convergence rates are suboptimal for the somewhat theoretical case when the available observational time is not a priori bounded. Indeed Hoffmann \cite{hoffmann2002} indicates that when the total observational time $T \to\infty$  one should expect a more classical convergence rate $n(\eps)^{-1/2}$, while for fixed finite $T$, the paper \cite{hoffmann2002} suggests that the  optimal convergence rate should be $n(\eps)^{-1/4}$. However, results in \cite{hoffmann2002} focus on approximate maximum likelihood estimators, and, therefore, cannot be directly applied to the moments based estimators which are considered in this paper.

%These remarks suggest that for practical applications, our moments estimators of the Heston SDE parameters should use the more pragmatic choice $J(\eps)=\eps^{-1}$,  which is closer to the optimal convergence rate $\sim n(\eps)^{-1/4}$. 
%Moreover, a pragmatic choice of sub-sampling parameters
%$N \sim 1/\eps$, $\Delta \sim \eps^{1/2}$, $J \sim 1/\eps$ yields $\eps^{1/4}$ as
%the rate of convergence for moment estimators and, thus, equivalent to the 
%optimal convergence rate $n^{-1/4}$. FIX THIS
%
%Our numerical simulations in sections \ref{speedRealVol} and \ref{paramSpeed} do exhibit  improved parameters estimators convergence rates comparable  to $n(\eps)^{-1/4}$. Moreover, the  effective convergence rates for parameter estimators observed in our numerical simulations never exceed the rate $n(\eps)^{-1/2}$.

%%%%%%%%%%%%%%%%%%
\section{Effective Speed of Convergence of realized volatilities to true volatilities}
\label{speedRealVol}
\subsection{Generic stochastic volatility models versus Heston SDEs}
\label{GENvsHES}
Recall that realized volatilities $ Y_t^\eps $ are computed by formula \eqref{RVformula} with partition size $J(\eps)$ for the time windows $\left[ t - \eps , t \right]$.
When the rates of return $R_t$ and the squared volatilities $V_t$ are driven by joint Heston SDEs, we have proved in Theorem \ref{Heston converge} that for each fixed even integer $q$ and for $s$ bounded, 
the $L^q$ norms $\|Y_s^\eps - V_s\|_q $ must verify, for some constant $C= C(q)$, the bounds
\begin{equation}
\label{L2L4bound}
\|Y_s^\eps - V_s\|_q \leq C \left( 1/J(\eps)^{1/q} + \sqrt{\eps} \right) .
\end{equation}
Our numerical simulations suggest that for the
"moderate" partition size $J(\eps) \sim 1/\eps$, one can possibly improve \eqref{L2L4bound} to yield the following convergence speeds, valid for $q=2$, $4$ and $s$ bounded, 
\begin{equation}
\label{sqrteps}
\|Y_s^\eps - V_s\|_q \sim \sqrt{\eps} .
\end{equation}
For $q=2$, this is indeed implied by  \eqref{L2L4bound}.
However, 
for $q=4$, our theoretical bound \eqref{L2L4bound} seems to overestimate the size of the partition $J(\eps)$ required to achieve the $L^4$ speed of convergence $\sim\sqrt{\eps}$ 

In this paper, to validate numerically the effective $L^2$ and $L^4$ speeds of convergence of realized volatilities $Y_t^\eps$ to true volatilities $V_t$ we have carried out the following intensive simulations for joint Heston SDEs.

%%%%%%%%%%%%%%%%%%
\subsection{Outline of our Heston SDEs simulations} 
\label{simulations} 

We have numerically simulated the joint Heston SDEs with the following specific parameters
\begin{equation}
\kappa = 1.7, ~~~\theta = 4, ~~~\gamma=2, ~~~\mu=0.05, 
\label{params}
\end{equation}
and for 3 values $\rho=0, ~0.3, ~0.7$ of the correlation coefficient between the Brownian noises driving the joint Heston SDEs. The Feller condition is valid since $2 \kappa \theta / \gamma^2 = 3.4$.
To emulate asymptotics as $\eps \to 0$, we consider the partition sizes
\begin{equation}
\label{JJ}
J(\eps)=10, \, 40, \, \eps^{-1}, \, \eps^{-2}
\end{equation}
and five values of $\eps= 0.1$, $0.05$, $0.02$, $0.01$, $0.005$.
Simulations with the fixed partition sizes $J=10, \, 40$ are presented here to illustrate that the errors in the 
approximation of volatility by the realized volatility will not decay as $\eps\to 0$ if the partition has a fixed 
number of points. Numerical simulations with $J=\eps^{-1}, \, \eps^{-2}$ are more interesting since they provide an insight into the convergence rate and selecting the optimal sub-sampling regime for parameter estimation
under indirect  observability. 

Simulations of true volatility paths $V_t$ are implemented by an Euler dicretization scheme for SDEs, with time step $10^{-6}$, except for $\eps=0.005$, $J(\eps)=1/\eps^2$, where the time step was $1.25 \times 10^{-7}$.
We perform a Monte-Carlo simulation by generating $200,000$ independent simulated paths $\{ V_t, Y_t^\eps \}$.
We then partition these $200,000$ paths into sub-ensembles, as discussed in section \ref{numerLqbound}.

\begin{figure}[ht]
\centerline{\includegraphics[width=12cm]{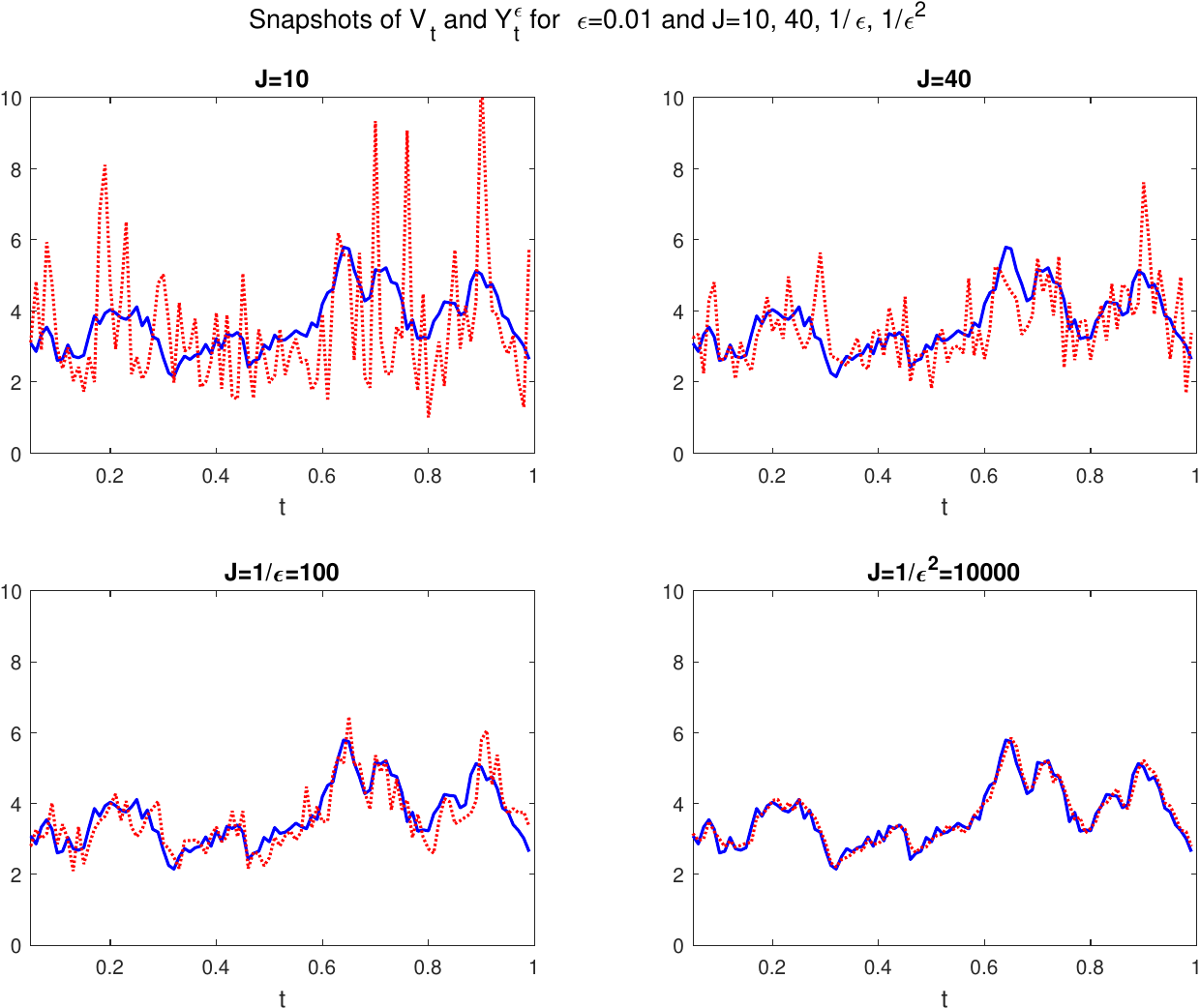}} 
\caption{Volatility $V_t$ and Realized Volatility $Y_t^\eps$ snapshots for $\eps=0.01$ and four partition sizes $J =J(\eps)$ as in \eqref{JJ}. Each sub-plot displays in solid blue the time evolution of one single random trajectory of the volatility $V_t$, $0 \leq t \leq 1$ and displays in dotted red an associated random time evolution of the realized volatility $Y_t^\eps$. Parameters in the Heston volatility SDEs are as in \eqref{params}. The two Heston SDEs are here driven by Brownian motions with zero correlation $\rho=0$.}
\label{fig0}
\end{figure}

%%%%%%%%%%%%%%%%%%
\subsection{Snapshots of joint sample paths $\{ V_t, Y_t^\eps \}$}
\label{snapshots} 
For $\eps=0.01$ and $\rho=0$, Figure \ref{fig0} displays four examples of joint paths $\{ V_t, Y_t^\eps \}$, where realized volatilities $Y_s^\eps$ are successively computed with the four $J(\eps)$ listed in \eqref{JJ}.
Clearly, the accuracy of the approximation of $V_t$ by $Y_t^\eps$ increases drastically for larger partition sizes $J(\eps)$. The smallest $J(\eps)$, equal to 10, generates many quite significant inaccuracies for $Y_t^\eps - V_t$. For $J=40$, we still note several significant inaccuracies. But for $J(\eps)=10000$ the sample paths of $V_t$ and $Y_t^\eps$ nearly coincide. Such large partition sizes are generally not feasible: for intraday stock prices sampled every minute, a partition size $J=10000$ would require an unrealistic sliding time window of about 20 trading days; for stock prices sampled every second, such partition would require a sliding window of approximately 2.7 hours.
For small values of $J(\eps)$, a practical remedy to eliminate large sharp peaks of $|Y_t^\eps - V_t|$ is time smoothing of the $Y_t^\eps$ either directly, or by using a weighted average in \eqref{RVformula}.

\begin{figure}[ht]
\centerline{\includegraphics[width=7cm]{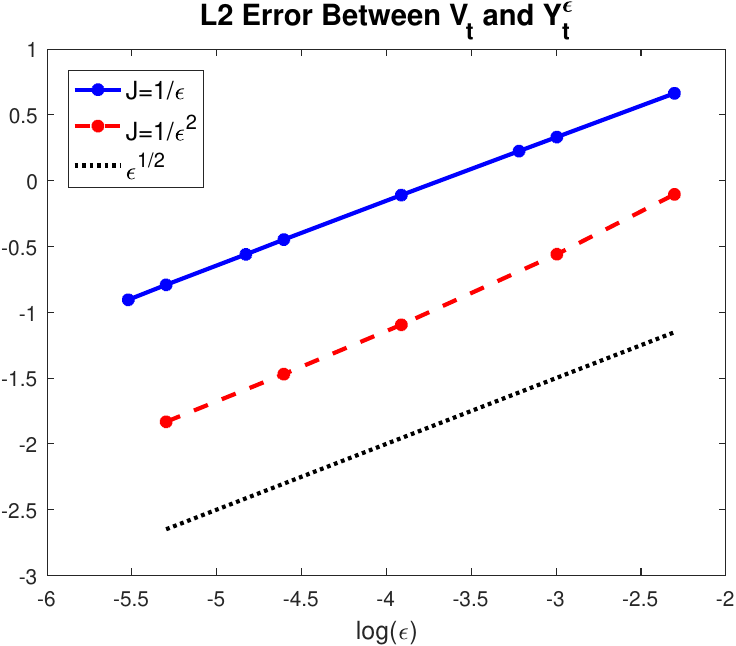}
\includegraphics[width=7cm]{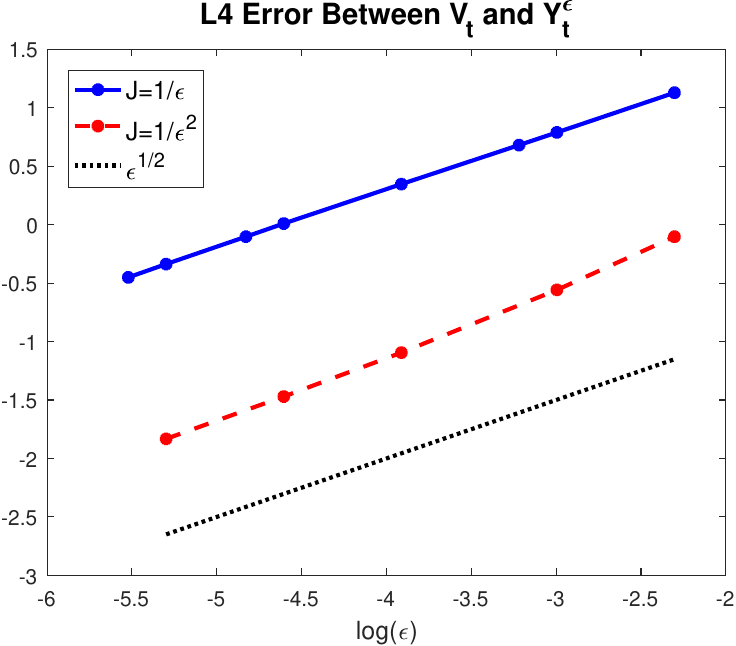}}
\caption{
Simulations of the Heston SDEs with $\rho=0$ and parameters listed in \eqref{params}.
Log-log plots of $L^2$ and $L^4$ errors for $T=1$. We plot $\log(\Lhat^2)$ on the left sub-plot and 
$\log(\Lhat^4)$ on the right sub-plot, as functions of $\log(\eps)$, for the partition sizes $J(\eps)=\eps^{-1}$ (solid blue line) and $J(\eps)=\eps^{-2}$ (dashed red line). The dotted black line represents a reference line with the slope 1/2.}
\label{fig3}
\end{figure}

\begin{table}
\begin{center}
\begin{tabular}{|l|c|c|c|c|c|}
\hline
\hspace*{2.5cm} $\eps = $ & 0.005 & 0.01 & 0.02 & 0.05 & 0.1 \\
\hline
$\Lhat^2$, $J=10$ & $1.85 \pm 0.13$ & $1.85 \pm 0.14$ & $1.86 \pm 0.14$ & $1.88 \pm 0.14$ & $1.94 \pm 0.14$\\
$\Lhat^4$, $J=10$ & $2.98 \pm 0.58$ & $3.00 \pm 0.6$ & $3.01 \pm 0.63$ & $2.99 \pm 0.52$ & $3.09 \pm 0.58$ \\
\hline
$\Lhat^2$, $J=40$ & $0.95 \pm 0.07$ & $0.96 \pm 0.07$ & $0.98 \pm 0.07$ & $1.05 \pm 0.07$ & $1.15 \pm 0.08$\\
$\Lhat^4$, $J=40$ & $1.52 \pm 0.24$ & $1.53 \pm 0.24$ & $1.56 \pm 0.27$ & $1.63 \pm 0.22$ & $1.76 \pm 0.24$ \\
\hline
$\Lhat^2$, $J=1/\eps$ & $0.45 \pm 0.03$ & $0.64 \pm 0.05$ & $0.90 \pm 0.07$ & $1.39 \pm 0.09$ & $1.94 \pm 0.15$\\
$\Lhat^4$, $J=1/\eps$ & $0.71 \pm 0.1$ & $1.01 \pm 0.15$ & $1.41 \pm 0.24$ & $2.20 \pm 0.38$ & $3.09 \pm 0.58$ \\ 
\hline
$\Lhat^2$, $J=1/\eps^2$ & $0.16 \pm 0.008$ & $0.23 \pm 0.013$ & $0.34 \pm 0.019$ & $0.57 \pm 0.033$ & $0.90 \pm 0.056$ \\
$\Lhat^4$, $J=1/\eps^2$ & $0.23 \pm 0.017$ & $0.32 \pm 0.028$ & $0.48 \pm 0.041$ & $0.83 \pm 0.074$ & $1.34 \pm 0.16$ \\
\hline
\end{tabular}
\end{center}
\caption{Values of estimated $\Lhat^2$ and $\Lhat^4$ errors as defined in \eqref{l24num}
in numerical simulations with $J(\eps)=10$, $40$, $1/\eps$, and $1/\eps^2$. 95\% confidence intervals are indicated with ``$\pm$'' numbers.}
\label{table1}
\end{table}
%

%
%%%%%%%%%%%%%%%%%%
\subsection{Numerical Asymptotics of $\|Y_t^\eps - V_t\|_q$ as $\eps \to 0$}
\label{numerLqbound}

We partition all $200,000$ simulated paths into sub-ensembles of 
$1000$ paths resulting in $G=200$ of such sub-ensembles. This allows us to compute confidence intervals
for the numerically estimated $L^q$ errors between the volatility and realized volatility.
In particular,  
we fix $T =1$, and for each sub-ensemble 
the empirical mean $M_k(q)$ of $ | Y_T^\eps - V_T |^q$ , $k=1,\ldots,G$ 
provides an estimator $\lhat^q_k = \left( M_k(q) \right)^{1/q}$ for the $L^q$ errors
$\|Y_T^\eps - V_T\|_q$. Final estimates for these $L^q$ errors are then given by 
\begin{equation}
\Lhat^q = \frac{1}{G}\sum_{k=1}^{G} \lhat^q_k, 
\label{l24num} 
\end{equation}
with 95\% confidence intervals $\Lhat^q \pm 1.96 \sigma(q)$ where 
\[
\sigma^2 (q) = \frac{1}{G}\sum_{k=1}^{G} \left( \lhat^q_k - \Lhat^q \right)^2.
\]

Numerical results for the $L^2$ and $L^4$ convergence are presented in Table \ref{table1} and Figure \ref{fig3}.
For constant partition sizes $J(\eps)=10$, $40$ both $\Lhat^2$ and $\Lhat^4$ error estimates
are nearly constant (independent of $\eps$), as predicted by our analytical bound \eqref{L2L4bound}.
Moreover, $\Lhat^2$ as well $\Lhat^4$ errors are both approximately twice smaller for $J=40$ than for $J=10$. For $q=2$, this is correctly predicted by our theoretical bound \eqref{L2L4bound}. 
But for $q= 4$, our theoretical bound is too pessimistic, since it predicts that $\Lhat^4$ should be about $1.4$ times smaller for $J=40$ than for $J=10$.

Figure \ref{fig3} depicts the $L^2$ and $L^4$ errors on the log-log scale for partition sizes $J(\eps)=\eps^{-1}$ and $J(\eps)=\eps^{-2}$. The graphs of estimation errors in these two situations are nearly perfect straight lines with slope 1/2 as soon as $\eps$ is small enough. Figure \ref{fig3} demonstrates quite convincingly that the two types of  estimation errors $\Lhat^2$ and  $\Lhat^4$ do scale like  $\eps^{1/2}$ for  $J(\eps) \sim 1 / \eps$ as well as for $J(\eps)\sim1/\eps^{2}$. 
So our numerical simulations of the joint Heston SDEs support the following conjecture about the asymptotic behaviour (as $\eps\to 0$) of the $L^4$ error
\begin{equation}
\label{L4conj}
\| Y_t^\eps - V_t \|_4 \sim \left( J(\eps) ^{-1/2} + \eps^{1/2} \right). 
\end{equation}

Our simulations indicate that for $q=2$ and $q=4 $ convergence speeds $\| Y_t^\eps - V_t \|_q \sim \sqrt{\eps}$ can be achieved for fixed $t$ when the realized volatility $Y_t^\eps$ is computed with partition sizes $J(\eps) \sim 1 / \eps$. 
This also implies that for the partition size $J(\eps) \sim 1 / \eps$, the lagged covariances of realized volatilities should converge to true lagged covariances at $L^2$-speeds $\sim \sqrt{\eps}$. We would like to point out that these results are obtained for
finite $\eps \ge 0.01$. It is extremely time-consuming to extend these results for
smaller values of $\eps$ and it is possible that the asymptotic behavior of $L^q$ errors
might change for $\eps \ll 0.01$. 
%
%Moreover, we will use the pragmatic 
%sub-sampling scheme
%\begin{equation}
%\label{subs2}
%N(\eps) \sim 1/\eps, \quad
%\Delta(\eps) \sim \eps^{1/2}, \quad
%J(\eps) \sim 1/\eps.
%\end{equation}
Surprisingly, this sub-sampling scheme with $J \sim 1/\eps$ 
gives much fast convergence rate for the $L^4$ norm compared 
to our analytical estimates. As discussed above, it is possible that for extremely small values of $\eps$
numerical simulations would become consistent with our analysis and 
yield the convergence rate of $\eps^{1/4}$. 
Improved convergence speed 
might be related to the ratio of constants in our analytical estimated for the $L^q$ speed of convergence for the moment estimates
However, for such small values of 
$\eps$ numerical computations becomes extremely costly. 
For practical values of $\eps$ considered here we obtain estimated convergence rate of $\eps^{1/2}$.

We also performed numerical simulations with $\rho=0.3$ and $\rho=0.7$ (not displayed here for brevity), where $\rho$ is the correlation between the two Brownian Motions $B_t$ and $Z_t$ driving the joint Heston SDEs. Our numerical results with $\rho > 0$ are almost identical to those for $\rho=0$. 
This is consistent with our proof of Theorem \ref{Heston converge}, which explores the autonomous Heston SDE \eqref{HesEq2} driving the true volatility $V_t$, without ever using the Heston SDE \eqref{HesEq1} for the rate of return process. Another key ingredient of our proof is the study of conditional expectations $\bE(Y | X)$ when $X$ and $Y$ are polynomial functions of a finite number of $V_t$ values. Again this analysis does not use the Heston SDE \eqref{HesEq1}. Constants introduced in Theorem \ref{Heston converge} may depend on $\rho$, but our numerical simulations indicate that this dependence is fairly weak. 

\section{Effective convergence speeds for observable estimators of Heston parameters}
\label{paramSpeed}

In this section we evaluate numerical convergence speeds for our observable estimators 
$\hat{\theta}_\eps$, $\hat{\kappa}_\eps$, $\hat{\gamma}_\eps$ of the Heston volatility SDE parameters. Recall that these estimators are based on estimated covariances of realized volatilities.
This set of simulations is performed as outlined in section \ref{simulations} with the following four values of 
$\eps= 0.1$, $0.05$, $0.02$, $0.01$. Realized volatilities are computed with two
different partition sizes 
\begin{eqnarray}
J(\eps) &=& 1 / \eps \text{~~~for~} \eps=0.1, 0.05, 0.02, 0.01,
\label{subs2} \\
J(\eps) &=& 1 / \eps^2 \text{~~for~} \eps=0.1, 0.05, 0.02.
\label{subs3}
\end{eqnarray}
In order to compute estimators we use the sup-sampling regime 
\begin{equation}
\label{subsa}
N(\eps)=50 \eps^{-3/2}, \;  \quad \Delta(\eps) = \eps^{1/2}
\end{equation}
which is a particular case of our general regime in \eqref{subs}.

Numerical estimates for the $L^2$ errors of estimation for the Heston SDE parameters are computed using a Monte-Carlo approach with 1000 long trajectories consistent with the sub-sampling regime outlined above. Each long trajectory yields one set of estimated parameter 
values computed using \eqref{estim}.

The lag $u^\eps$ is chosen to be approximately 0.6. However, since in our discrete formulas the lag is an integer multiple of $\Delta$, i.e. $u^\eps = r \times \Delta(\eps)$ the lag changes slightly for different values of $\eps$. 
The values of the lag for simulations with different values of $\eps$ are chosen to be
$$
u^\eps=[0.64,~0.66,~0.56,~0.6].
$$

\begin{figure}[ht]
\centerline{\includegraphics[width=7.5cm]{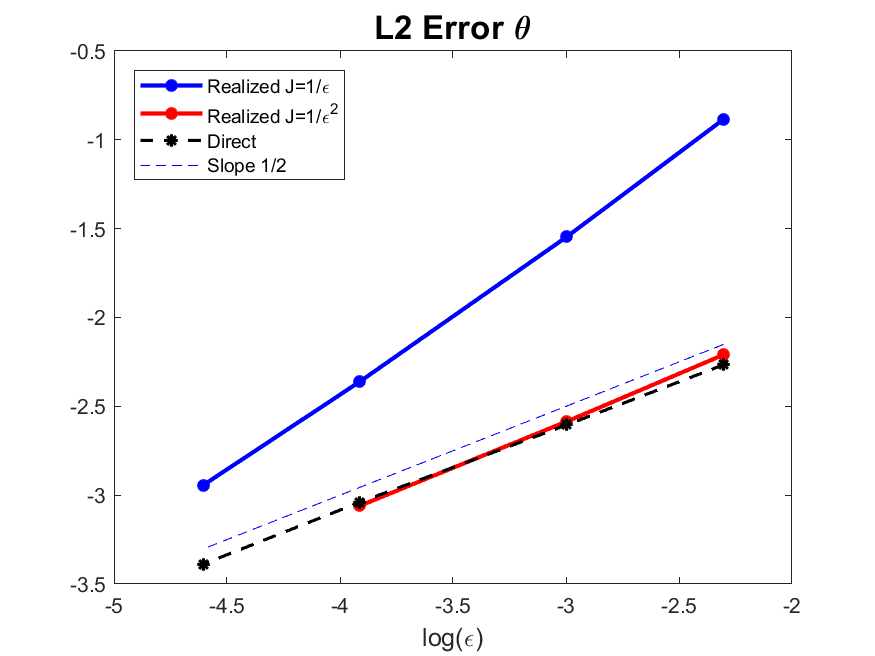}
\includegraphics[width=7.5cm]{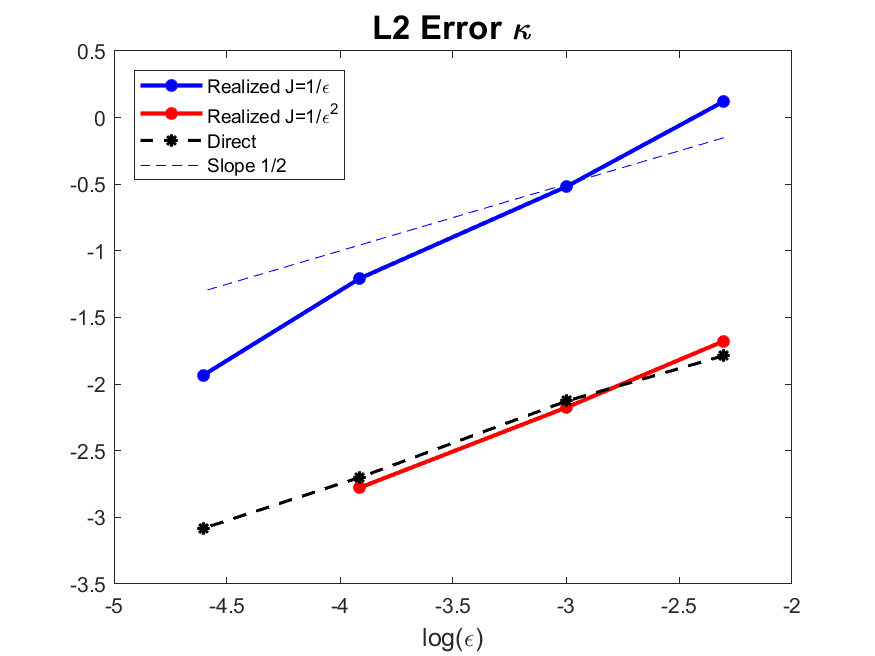}}
\centerline{\includegraphics[width=7.5cm]{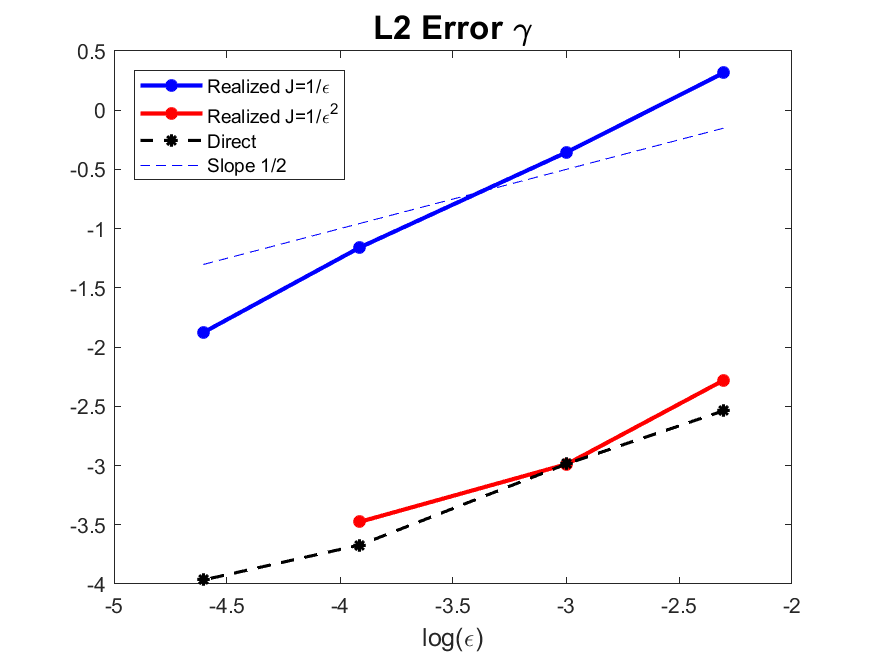}}
\caption{Log-log plots of numerical $L_2$-errors for parameter estimators of the Heston volatility SDE. 
We plot, as functions of $\log(\eps)$, the logarithms of 
$\|\hat{\theta}^\eps - \theta\|_2$ (top left panel), 
$\|\hat{\kappa}^\eps - \kappa\|_2$ (top right panel), 
$\|\hat{\gamma}^\eps -\gamma\|_2$ (bottom panel).
Bold Blue line and Bold Red line - parameter estimators are computed using realized volatility with $J=\eps^{-1}$ and $J=\eps^{-2}$, respectively. 
Black dashed line - parameter estimators computed from direct observations of volatility,
Blue dashed line represents straight reference line with
slope 1/2.}
\label{fig6a}
\end{figure}

Numerical estimates for the $L^2$-errors of parameter estimators are presented in Figure \ref{fig6a}. 
The $L^2$ error $\| \hat{\theta}^\eps -\theta \|_2$ is
depicted in the upper-left part of Figure \ref{fig6a}.
Since $\hat{\theta}^\eps$ estimates the empirical mean of the volatility process,
expression \eqref{bnd0} is directly applicable in this case.
Figure \ref{fig6a} demonstrates that, although the sub-sampling 
regime \eqref{subsa} is identical in both cases, the number of points
for computing the realized volatility, $J$, significantly affects the behavior
of parameter estimators. First, the  numerical error is reduced significantly 
(approximately 10 times)
for $J=\eps^{-2}$ compared to $J=\eps^{-1}$. Second, the 
asymptotic behavior for the $L^2$ error seems also to be affected by
the choice of $J$ which is most evident for parameter  $\theta$.
For the sub-sampling regime \eqref{subsa} and \eqref{subs2}
the decay of $L^2$ error is much faster than $\eps^{1/2}$
for all three parameters. However, with the choice of $J$ in \eqref{subs3}
errors in parameter estimators are almost the same as for the 
estimators computed under direct observability and the error is 
proportional to $\eps^{1/2}$. We would like to point out that numerical 
simulations presented here are for finite values of $\eps\in[0.01,\ldots,0.1]$.
We conjecture that for smaller values of $\eps < 0.01$ the convergence rate
of all parameter estimators computed with $J=\eps^{-1}$ 
should change to $\eps^{-1/2}$ and asymptote to the black line corresponding
to the estimators computed under direct observability.

Our numerical simulations have important practical consequences.
In particular,  our numerical results suggest that it is important to follow the 
regime $J=1/\eps^2$ for larger values of  $\eps$. However, one can switch
to a different regime (e.g. $J \sim \eps^{-3/2}$ or even $J\sim\eps^{-1}$) 
for smaller values of $\eps$ to reduce the computational
overhead. This is motivated by rather fast rate of convergence 
for parameter estimators computed with $J=\eps^{-1}$.

In the regime with $J=\eps^{-1}$ for smaller values of $\eps=0.01, 0.02$ 
errors in all parameter estimators decay significantly compared to $\eps=0.05,0.1$. Behavior of parameter estimators themselves is depicted in 
Figure \ref{fig6b}. It is obvious that the sub-sampling regime $J=\eps^{-1}$
results in very large errors for larger values of $\eps=0.05$, $0.1$.
Parameter $\theta$ is estimated more accurately 
under the computational scheme with $J=\eps^{-1}$ for $\eps=0.01$, but
there is still approximately 10\% relative error in estimating parameters 
$\kappa$ and $\gamma$ in this regime. On the other hand, relative errors in
estimating all three parameters are much smaller for $J=\eps^{-2}$ and 
$\eps=0.1$. Therefore, the most beneficial strategy is to use a bigger 
window, $\eps$, for computing the realized volatility with a large number of
points $J=\eps^{-2}$ for the return process.
\begin{figure}[ht]
\centerline{\includegraphics[width=7.5cm]{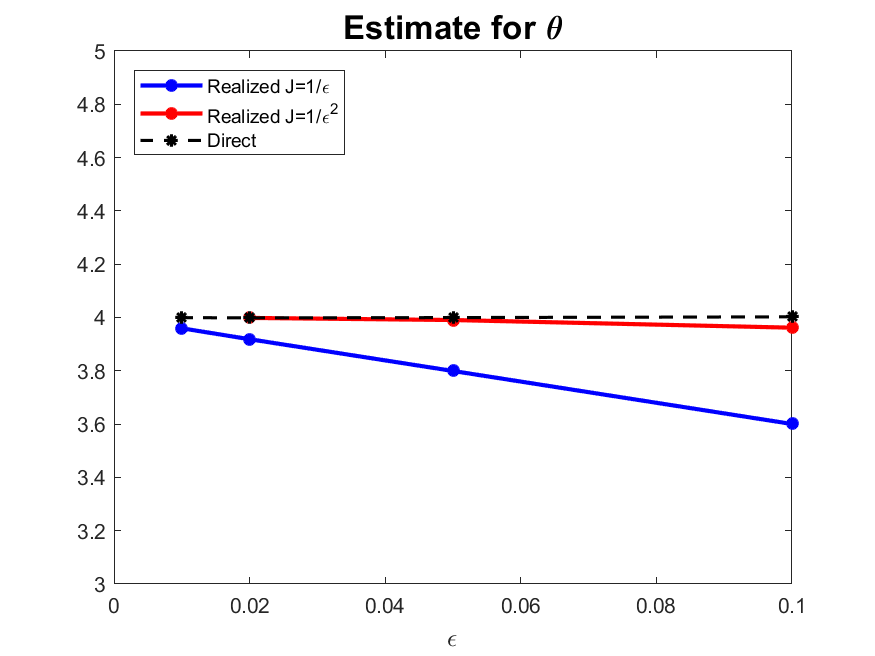}
\includegraphics[width=7.5cm]{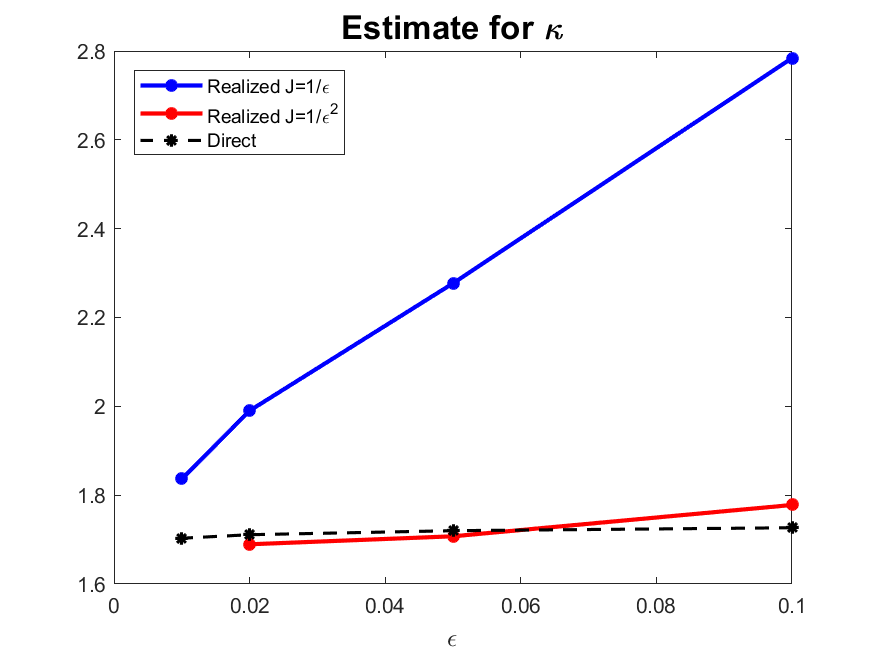}}
\centerline{\includegraphics[width=7.5cm]{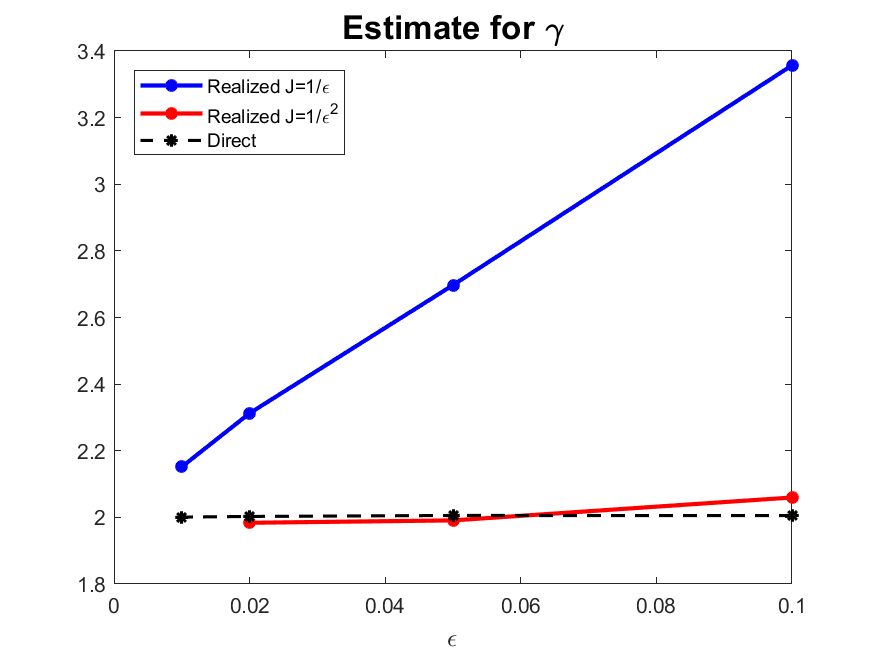}}
\caption{Behavior of parameter estimators of the Heston volatility SDE vs $\eps$. 
Bold Blue line and Bold Red line - parameter estimators are computed using realized volatility with $J=\eps^{-1}$ and $J=\eps^{-2}$, respectively. 
Black dashed line - parameter estimators computed from direct observations of volatility.}
\label{fig6b}
\end{figure}

Asymptotic behaviour of our observable estimators for the Heston parameters $\kappa$ and $\gamma$ strongly depends on the behavior of the lagged covariances $K^\eps(u)$ of $Y_t^\eps$. Thus, we also present our numerical results for the estimation
of $\hat{K}^\eps(u)$ with 
two particular time lags $u=0$ and $u \approx 0.6$.

The mean and lagged covariances of $Y_t^\eps$ are approximated by their empirical estimators, given by 
\begin{equation}
\label{momyt}
\hat{m}^\eps = \frac{1}{N}\sum\limits_{j=1}^N Y_{j\Delta}^\eps, \qquad
\hat{K}^\eps(u) = 
\frac{1}{N-s}\sum\limits_{i=1}^{N-s} Y_{j\Delta}^\eps Y_{(j+s)\Delta}^\eps - (\hat{m}^\eps)^2,
\end{equation}
where for lag $u=0$ the integer $s$ is $s=0$, and is chosen such that $s\Delta \approx 0.6$ when lag $u\approx 0.6$.
Recall that the stationary moments of true volatilities  are given by 
\eqref{m1m2}, \eqref{covstationary}.

The $L^2$ errors for the lagged covariances are computed from Monte-Carlo simulations as
\begin{equation}
\label{l2errY3}
\| \hat{K}^\eps(u) - K(u) \|_2 \equiv \sqrt{\frac{1}{MC}\sum\limits_{k=1}^{MC} \left(\hat{K}^\eps(u) - K(u) \right)^2}, 
\end{equation}
where the sum involves $MC=1000$ independent evaluations of $\hat{K}^\eps(u) $. 
Results for the covariance estimation are displayed in Figure \ref{fig5}. 
\begin{figure}[ht]
\centerline{\includegraphics[width=7.5cm]{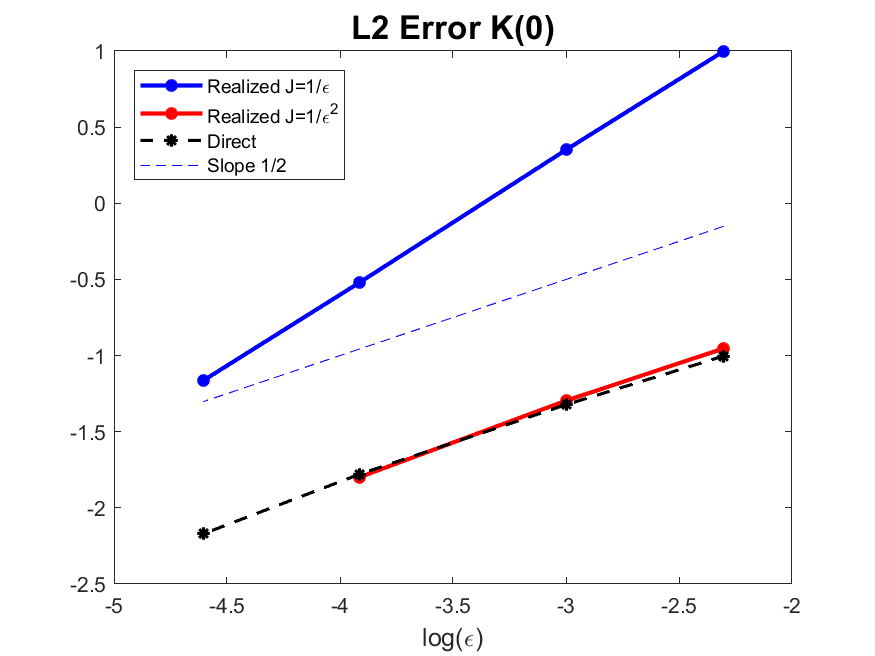}
\includegraphics[width=7.5cm]{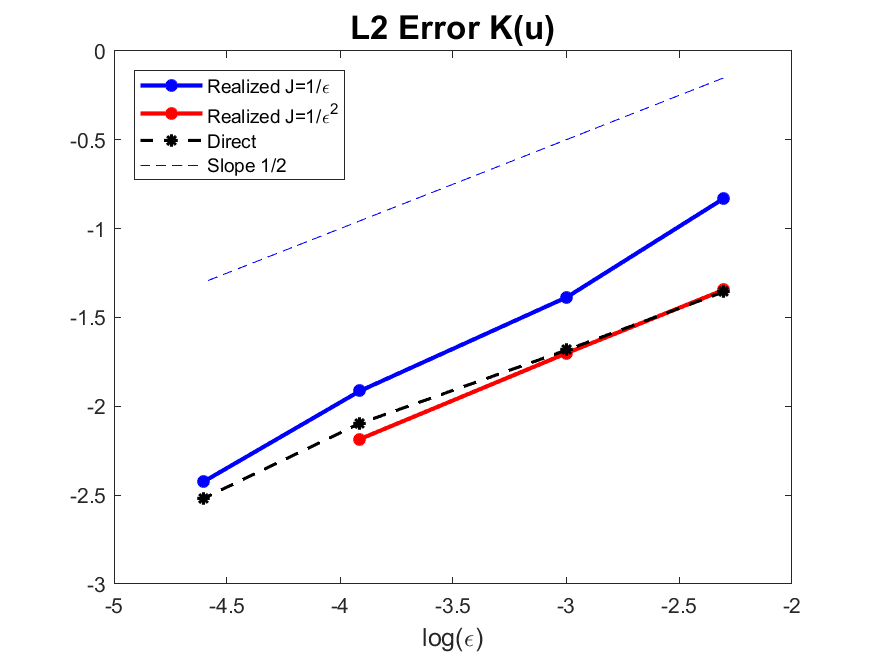}}
\caption{Log-log plots for $L^2$ estimation errors for second moments of $Y_t^\eps$ computed by equations \eqref{l2errY3}.
Left panel - $\log \|\hat{K}^\eps(0) - K(0)\|_2$, 
Right panel - $\log \|\hat{K}^\eps(u) - K(u)\|_2$ with $u \approx 0.6$.}
\label{fig5}
\end{figure}
Behavior of $L^2$ errors for estimated second moments is consistent with the
behavior of parameter estimators discussed earlier. In particular, for the range of 
$\eps\in[0.01,\ldots,0.1]$ convergence rate of $\hat{K}^\eps(u)$
computed
with $J=\eps^{-1}$ appears to be much
faster than $\eps^{-1/2}$, especially for $\hat{K}(0)$. 
Similar to the behavior of parameter estimators, 
we conjecture that this is due to the finite range of $\eps$.

The choice of the lag $u^\eps$ is motivated by some practical considerations. In particular, one should perform an a-posteriori check after the parameter estimator $\hat\kappa$ is computed and ensure that the estimated lagged correlation $K(u^\eps)$ is not too close to 0 or 1, for instance by checking that $e^{-\hat{\kappa} u^\eps}$ lies between $0.3$ and $0.7$.
Apart from such practical constraint above, the choice of $u^\eps$ is otherwise arbitrary. 
We performed numerical simulations (not reported here) investigating several other choices of the lag $u^\eps$. In particular, we considered $u^\eps \approx 0.3$ and the ``vanishing lag'' case $u^\eps = \Delta \equiv \sqrt{\eps}$. Our numerical simulations indicate that for the specific Heston SDE parameters considered here the choice $u^\eps \approx 0.6$ yielded  near-optimal asymptotic behavior of both, observable moments estimators and parameter estimators.

%
%%%%%%%%%%%%%%%%%%
\section{Conclusions}
\label{conc}

We carried out an extensive analytical and numerical investigation of the Heston joint SDEs driving jointly the squared volatilities $V_t$ and the rate of returns $R_t$. Since the volatility process $V_t$ cannot be observed directly, realized volatilities $Y_t^\eps$ computed from the return process $R_s$ with $s$ in the sliding window $\left[t-\eps, t\right]$ provide classical observable approximations of the unobservable $V_t$.

The main goal of this paper is to define and study observable estimators of the Heston SDEs parameters computed from the $Y_t^\eps$, and exhibiting asymptotic consistency as $\eps \to 0$. 
This context fits our general framework of \emph{indirect observability} where parameter estimators for the dynamics of an unobservable process $X_t$ can only be computed from observations of a process $Y_t^{\eps}$ approximating $X_t$ as $\eps \to 0$. 
Computing realized volatilities $Y_t^\eps$ from the rates of returns $R_t$ requires partitioning the window $\left[t-\eps,t\right]$ into $J(\eps)$ time intervals. For the Heston SDEs we prove precise bounds for $L^q$ norms $\|Y_t^\eps - V_t\|_q $ in terms of $J(\eps)$ and $\eps$. In particular we show that $\|Y_t^\eps - V_t\|_4 \leq C \sqrt{\eps}$ provided $J(\eps) \sim 1/\eps^2$. However, 
for small window sizes, $\eps$,
partition sizes  $J(\eps) \sim 1/\eps^2$ are not very practical  
since they require an overwhelming number of points for small window size $\eps$. 
Our numerical simulations indicate that it is possible to obtain 
reasonable numerical estimates in $L^2$ sense with
more practical partition sizes $J(\eps) \sim 1/\eps$. However, 
$L^4$ errors $\|Y_t^\eps - V_t\|_4$ are more sensitive to the choice of the
partition size.

Our observable estimators of the Heston SDEs parameters are defined as  explicit functions of the empirical mean and two empirical lagged covariances computed from $N(\eps)$ observations $Y_{j \Delta(\eps)}^\eps$, $j=1,\ldots,N(\eps)$ of the realized volatility, sub-sampled with time step $\Delta(\eps)$. 
We prove that for fastest convergence speed of observable parameter estimators to true parameters, the optimal sub-sampling regime is provided by $N(\eps) \sim \eps^{-3/2}$ and $\Delta(\eps) \sim \eps^{1/2}$ with $J(\eps) \sim \eps^{-2}$. 
Our sub-sampling scheme \eqref{optimal1} provides a needed balance between the $L^2$ errors of estimation  on empirical covariances and the $L^2$ difference between true and realized volatilities. 
This optimal sub-sampling scheme 
corresponds to a total  observational time  $T(\eps) = N(\eps)  \Delta(\eps)  \sim 1 / \eps$ and a  total
number $n(\eps)$ of observed returns rate values $n(\eps) = N(\eps) J(\eps) = 1 / \eps^{7/2}$.

Surprisingly, our numerical simulations indicate a much faster speed
of convergence for the sub-sampling regime  $N(\eps) \sim \eps^{-3/2}$ and $\Delta(\eps) \sim \eps^{1/2}$ with $J(\eps) \sim \eps^{-1}$ for a rather wide
range of $\eps\in[0.01,0.1]$. Convergence rates for all three parameter estimators
are close to $\eps^1$. Parameter estimators computed under indirect observability
are inferior to estimators computed from directly observed time-series 
of the volatility process. Therefore, we conjecture that the convergence rate
for the sub-sampling regime with $J=1/\eps$ should change for $\eps \ll 0.01$.
However, verifying this with numerical simulations is extremely computationally 
costly. In addition, we also observe that relative errors under the sub-sampling
regime $J=1/\eps$ are much larger compared to the sub-sampling regime
with $J=1/\eps^2$. 
Therefore, to reduce the computational cost, the
\textbf{optimal estimation strategy}
is to use a larger window $\eps$ for computing the realized volatility
with a large number of points $J=1/\eps^2$ for the return process.

When one imposes a bound on the total observational time $T(\eps)$, our theory and numerical simulations indicate that there is a lower $L^2$ bound on the estimation errors for the  parameters of the Heston  volatility SDE. An upper bound on $T(\eps)$ essentially forces a lower-bound on $\eps$. Therefore, in practice, it is then not beneficial to keep over-refining the partition  of the sliding time window used to  compute realized volatilities.
Moreover, when $T(\eps)$ is bounded, decreasing the size of the sliding window, $\eps$, constrains the number of observations of the return process inside this window to decrease, and this generates more inaccurate approximations of true volatilities by realized volatilities.

Our theoretical analysis and numerical simulations of the Heston SDEs presented here provide practical guidelines for fitting joint Heston SDEs to practical observations of stock prices. In particular our results should help define  adequate choices for  the size $\eps$ of the sliding windows used  to compute realized volatilities, as well as  for the selection of an efficient sub-sampling time step  of returns rate observations. 
\smallskip

{\bf Acknowledgements.}
I.T. and R.A. were supported in part by the NSF Grant DMS-1109582.
I.T. is also partially supported by the NSF Grant DMS-1620278.

%\clearpage

\appendix

%%%%%%%%%%%%%%%%%%
%%%%%%%%%%%%%%%%%%
% \paragraph{MATHEMATICAL APPENDIX}
%

\section{Polynomial functions of volatilities and Theorem \ref{polymoments}}
\label{polyfunctionals}
We evaluate conditional moments for polynomial functions of squared volatilities $V_t$. Let $\calF_s $ be the filtration generated by the Brownian $B_t$ driving the  Heston volatility SDE. Note that conditioning by $\calF_s$ gives the same results when the volatility process starts at any fixed $V_0 =y >0$  or when it is the only stationary  process driven by the volatility Heston SDE. 

Recall the statement of Theorem \ref{polymoments}.
Fix any polynomial $h$ of total degree $n$ in $k$ variables $(x_1, \ldots, x_k)$. 
Let $0 = u(0) < u(1) < \ldots < u(k)$ be any sequence of $k+1$ lag instants.
For $T >0$, define random variables $H$ and $H_T$ by
\begin{equation}\label{defH} 
H= h \left(V_{u(1)}, \ldots, V_{u(k)} \right) \;\; \text{and} \; 
H_T= h \left(V_{u(1)+T}, \ldots, V_{u(k)+T} \right).
\end{equation} 
Recall that $\nu_T = e^{- T \kappa}$. 
Define $w_j = e^{ - \kappa (u(j+1) - u(j))}$ for $j = 0, \ldots, k-1$. 
There is then a polynomial $POL$ in $k+2$ variables such that for all $T>0$ and all $y>0$
\begin{equation}\label{EHTa}
\bE_y (H_T ) = POL(\nu_T, y \nu_T, w_0, w_1, \ldots , w_{k-1}) .
\end{equation}
The degree and coefficients of POL are determined by the integers $n,k$, the coefficients of $h$, and the vector $\btheta$. 
The asymptotic polynomial moments are then given by
$$
\lim_{T \to \infty} \bE_y (H_T) = \bE_\psi (H) = POL(0, 0, w_0, w_1, \ldots , w_{k-1}).
$$
For any integer $q \geq 1$ there is a positive constante $C$, and an integer $p \geq 1$, determined only by $q, k,\btheta$ and the polynomial $h$ such that, for all positive $T$ and $y$, and all $0 = u(0) < u(1) < \ldots < u(k)$
\begin{equation}\label{Lqlimitpola}
\Big| \bE_y \left[ | H_T - \bE_\psi (H) |^q \right] \Big| \leq C (1 + y^p) e^{- T \kappa}.
\end{equation}
In particular for $q= 1$ one has
\begin{equation}\label{limitpola}
\Big| \bE_y (H_T) - \bE_\psi (H) \Big| \leq C (1 + y^p) e^{- T \kappa}.
\end{equation}

\noindent
\textbf{Remarks.} Equation \eqref{Lqlimitpola} also implies that as $T \to \infty$, the random polynomial functionals $H_T$ converge in $L^q$-norm to the constants $\bE_\psi(H)$, where $L^q$-norms are computed under $\bE_y$. 
Note also, that all the constants introduced in the theorem and in its proof below do not depend on the actual lags $u(0) < u(1) < \ldots < u(k)$. 

\paragraph{Proof of Theorem \ref{polymoments}:} 
\begin{proof}
By linearity, we only need to consider the case when $h$ is a monomial in $k$ variables. For $k=1$, the result was proved by \eqref{Mpol}. Proceeding by recurrence on $k$, assume the result is true for monomials in $k-1$ variables $(x_2, \ldots, x_k)$. Any monomial $h$ in $k$ variables can be written as
$h = x_1^m g(x_2, \ldots, x_k)$.
Define 
$$
G_T= g \left(V_{u(2)+T}, \ldots, V_{u(k)+T} \right) \;\; \text{and} \;\; H_T= V_{u(1)+T}^m G_T.
$$
The recurrence hypothesis provides a polynomial $R$ in $(k+1)$ variables such that, for all $T$
$$
\bE_y (G_T) = R \left(\nu_T, y \nu_T, w_1, w_2, \ldots , w_{k-1} \right) 
$$
where the coefficients of $R$ are determined by $g, \btheta$. By the Markov property we thus get 
$$
\bE \left(G_T | \calF_{u(1) +T} \right) = R \left(\nu_T, V_{u(1) +T} \nu_T, w_1, w_2, \ldots , w_{k-1} \right) .
$$
Since $\bE_y [H_T] = \bE_y [ V_{u(1)+T}^m \bE(G_T | \calF_{u(1) +T}) ] $ we then obtain
$$
\bE_y [H_T] = \bE_y \left[ V_{u(1)+T}^m R(\nu_T, V_{u(1) +T} \nu_T, w_1, w_2, \ldots , w_{k-1}) \right].
$$
Each monomial $M$ of $R$ is of the form $\nu_T^p (V_{u(1) +T} \nu_T)^j S(w_1, w_2, \ldots , w_{k-1})$ for some $p$, $j$ and some polynomial $S$. Then in the right-hand side of \eqref{EHTa}, $M$ contributes a term of the form
$$
\Gamma(M) = \nu_T^{p+j} S(w_1, w_2, \ldots , w_{k-1}) \bE_y \left[ V_{u(1)+T}^{m+j} \right].
$$ 
Due to \eqref{Mpol} with $q= m+j$, this last conditional expectation is a polynomial in the two variables 
$$
\nu_{u(1)+T} = \nu_T w_0 \;\; \text{and} \; V_0 \nu_{u(1)+T} = y \nu_T w_0
$$
with coefficients depending only on $m+j$ and $\btheta$. 
Hence $\Gamma(M)$ is a polynomial in $\nu_T$ and $y \nu_T$, with coefficients which are polynomials in $(w_0, w_1, w_2, \ldots , w_{k-1}) $, fully determined by $m$, $j$, $\btheta$. The same property must then hold for the sum $\bE_y(H_T)$ 
of all the $\Gamma(M)$ contributed by the monomials $M$ of $R$.
This completes the proof of \eqref{EHTa} by recurrence on $k$.

Write $POL$ in \eqref{EHTa} as a polynomial $POL(z)$ in the $k+2$ variables $z_i$.
The vector $z(T) = (\nu_T, y \nu_T, w_0, \ldots, w_{k-1})$ tends to $z(\infty) = (0, 0, w_0, \ldots, w_{k-1})$ as $T \to \infty$. 
The polynomial $Q(z(T)) = POL(z(T)) - POL(z(\infty))$ can be written for some integer $p$ 
$$
Q (z(T)) = \nu_T A_0 + \sum_{s=1}^p y^s \nu_T^s A_s,
$$
where for $s=0, \ldots, p$, each $A_s$ is a polynomial in the $(k+1)$ variables $(\nu_T, w_0, \ldots, w_{k-1})$. Since all these positive $(k+1)$ variables are inferior to $1$, then each $|A_s|$ remains bounded for all $T \geq 0$ and all $u(0) < u(1) < \ldots < u(k)$. Hence there is a constant $C$ such that
$$
| A_s | \leq C \;\; \text{and} \;\; y^s \leq C (1+y^p) \;\; \text{for all} \;\; s= 0, \ldots, p, \, T \geq 0, \, y > 0.
$$
For all $s \geq 1$ we have $\nu_T^s \leq \nu_T= e^{-T \kappa}$, and hence the expansion of $Q(z(T))$ provides a new constant $C_1$ such that, for all $u(0) < u(1) < \ldots < u(k)$, 
$$
\big| \bE_y (H_T) - \bE_\psi (H) \big| = \big| Q (z(T)) \big| \leq C_1 (1+y^p) e^{-T \kappa} \;\; \text{for all} \;\; T \geq 0, \, y > 0.
$$
This proves \eqref{limitpola}.

Let $\barH = \bE_\psi (H)$. 
Expand $\beta(T) = (H_T - \barH)^q$ as a linear combination of terms of the form 
$ \barH^{q-j} H_T^j $ for $j= 0, \ldots, q$.
Recall that $h$ is a polynomial in $x_1, x_2, \ldots, x_k$. For $j$ fixed, $\sigma_j = h^j$ is also a polynomial in $x_1, x_2, \ldots, x_k$. By definition \eqref{defH}, we can express both $\Sigma = H^j$ and $\Sigma_T = H_T^j$ as 
$$
\Sigma= \sigma_j \left(V_{u(1)}, \ldots, V_{u(k)} \right) \;\; \text{and} \;\; 
\Sigma_T= \sigma_j \left(V_{u(1)+T}, \ldots, V_{u(k)+T} \right).
$$ 
For each $j$, equation \eqref{limitpola} applied to the polynomial $\sigma = h^j$ provides a constant $C_j$ and an integer $p(j)$ such that 
$$
\Big| \bE_y (\Sigma_T) - \bE_\psi (\Sigma) \Big| \leq C_j \left(1+y^{p(j)} \right) e^{-T \kappa} \;\; \text{for all} \;\; T \geq 0, \, y > 0
$$
and hence there are constants $c_j$ such that 
$$
\Big| \bE_y (\bar{H}^{q-j}H_T) - \bE_\psi (\bar{H}^{q-j} H_T)) \Big| \leq c_j \left(1+y^{p(j)} \right) e^{- T \kappa} \;\; \text{for all} \;\; T \geq 0, \, y > 0.
$$
Applying this to $j= 0, \ldots, q$ and using the Newton binomial formula yields, for some new constant $C$,
$$ 
\bE \left[ \big| \beta(T) \big| \right] \leq C e^{- T \kappa} \sum_{j= 0}^q c_j \left(1+y^{p(j)} \right) 
\;\; \text{for all} \;\; T \geq 0, \, y > 0
$$
which completes the proof of \eqref{Lqlimitpola}.

\qed
\end{proof}

%\bibliography{renrefs}

\end{document}